\numberwithin{equation}{section}
\numberwithin{figure}{section}
\theoremstyle{plain}
\newtheorem{thm}{\protect\theoremname}
\theoremstyle{plain}
\newtheorem{prop}[thm]{\protect\propositionname}
\theoremstyle{definition}
\newtheorem{example}[thm]{\protect\examplename}
\theoremstyle{plain}
\newtheorem{lem}[thm]{\protect\lemmaname}
\providecommand{\examplename}{Example}
\providecommand{\lemmaname}{Lemma}
\providecommand{\propositionname}{Proposition}
\providecommand{\theoremname}{Theorem}
\begin{document}

%

\title{Conditions for the Existence of a Generalization of R\'{e}nyi Divergence}

\author{Rui F. Vigelis\thanks{Federal University of Cear\'{a}, Computer Engineering, Campus Sobral,  R. Cel. Estanislau Frota, Sobral - CE, 62010-560, Brazil, email: \texttt{rfvigelis@ufc.br}.}, Luiza H. F. de Andrade\thanks{Federal Rural University of the Semi-arid Region, Center of Exact and Natural Sciences, Av. Francisco Mota 572,  Mossor\'{o} - RN, 59625-900, Brazil, email:\texttt{luizafelix@ufersa.edu.br}.} and Charles C. Cavalcante\thanks{Federal University of Cear\'{a}, Teleinformatics Engineering Department, Campus do Pici, Bloco 722, CP 6005, 60440-900, Fortaleza - CE, Brazil, email:\texttt{charles@gtel.ufc.br}.}}

\date{}

%
%
%
%
%
%

\maketitle

\begin{abstract}
We give necessary and sufficient conditions for the existence of a generalization of R\'{e}nyi divergence, which is defined in terms of a deformed exponential function. If the underlying measure $\mu$ is non-atomic, we found that not all deformed exponential functions can be used in the generalization of R\'{e}nyi divergence; a condition involving the deformed exponential function is provided. In the case $\mu$ is purely atomic (the counting measure on the set of natural numbers), we show that any deformed exponential function can be used in the generalization.
\end{abstract}

\begin{center}
  \begin{minipage}[c]{0.8\textwidth}
  \small
    \textbf{Keywords:} Generalized divergence ; R\'{e}nyi entropy ; Information geometry ; Existence conditions
  \end{minipage}
\end{center}

%
%


\section{Introduction}

Entropy has been widely employed as a key measure of information in dynamical systems. Information theory, the field that investigates the characterization and limits of information, allows a number of applications that span from areas such as communications, neurobiology, natural language processing, econometric and other physical systems \cite{Cover:2006}.

Shannon \cite{Shannon_1948} was the first to interpret that information was linked to probability and to propose the quantity as an information or uncertainty measure, which can be written as
\begin{equation*}
H(\mathbf{p})=-\sum_{i=1}^n p_{i}\ln p_{i},
\end{equation*}
where $\mathbf{p}$ is the  probability mass function of the source of information. The quantity was named as entropy by its similarity with Boltzmann entropy (see, for instance, \cite{Boltzmann_1964}). Another well know measure of information was proposed by Tsallis \cite{Tsallis:1988}, who defined the expression
\begin{equation*}
H_{q}(\mathbf{p})=\frac{1}{q-1}\left(1-\sum_{i=1}^{n}p_{i}^{q}\right),
\end{equation*}
as a generalized entropy dependent on the parameter $q\geq0$, since when $q\rightarrow1$ it reduces to the Shannon entropy. Also, in \cite{Tsallis:1994} Tsallis defined the function $\ln_{q}(x)=\frac{x^{1-q}-1}{1-q}$ for any non negative $q$, as a generalized logarithm function, which was termed as $q$-logarithm, since $\ln_{q}(x)\rightarrow\ln(x)$, as $q\rightarrow1$. As a consequence, Tsallis entropy generalizes Shannon entropy \cite{Tsallis:1994}. The uniqueness theorem for Tsallis entropy was presented in \cite{Suyari_2004} by introducing a generalization of Shannon-Khinchin axiom. Furthermore, this theorem was generalized and simplified in \cite{Furuichi_2005}. Tsallis entropy plays a crucial role in nonextensive statistics also called Tsallis statistics \cite{Tsallis_2001}.

On another way of visualizing the uncertainty of events and how to measure them, R\'{e}nyi proposed a family if entropies that can be written as \cite{Renyi:1961}:
\begin{equation*}
  H_{\alpha}(\mathbf{p}) = \frac{1}{1-\alpha}\ln\left(\sum_{i=1}^{n}p_{i}^{\alpha}\right),
\end{equation*}
where $\alpha$ is the entropy order. R\'{e}nyi entropy is then flexible in the sense it can, as in the Tsallis case, to provide several different expressions by choosing different entropy orders. Due to its properties for the case $\alpha = 2$, some researchers have been working in the field termed information theoretic learning (ITL) where several interesting properties arise for this entropy definition \cite{Principe:10}. This flexibility of the model proposed by R\'{e}nyi is one of our key interests on the investigation of work.

While entropy is an uncertainty measure, relative entropy can be interpreted as a measure of statistical distance between two probability distributions \cite{Cover:2006}. Relative entropy, or statistical divergence, plays an essential role in information geometry \cite{Zhang:2004}. A well-known example is the relative (Shannon) entropy, or Kullback-Leibler (KL) divergence, given by
\begin{equation*}
\mathcal{D}(\mathbf{p}\mathbin{||}\mathbf{q})=\sum_{i=1}^{n}p_{i}\ln\left(\frac{p_{i}}{q_{i}}\right),
\end{equation*}
which was defined in \cite{kullback1951}. It can also be interpreted as an analogous (non symmetric) of the squared
of the Euclidean distance \cite{Csiszar:2004}. One possible generalization of this divergence is the Tsallis relative entropy \cite{Tsallis_1998} which is obtained when we replace the ordinary logarithm by the $q$-logarithm
in the KL divergence which yields \begin{equation*}
\mathcal{D}_{q}(\boldsymbol{p}\mathbin{||}\boldsymbol{q})=\sum_{i=1}^{n}p_{i}\ln_{q}\Bigl(\frac{p_{i}}{q_{i}}\Bigr).
\end{equation*}
Both KL divergence and Tsallis relative entropy satisfy important properties, such as non negativity, monotonicity and joint convexity, among others \cite{Furuichi:2004}. One must note that the expressions for the definitions of entropy and divergence have considered discrete distributions, as in the original works  but it is straightforward to provide those expressions considering continuous distributions by replacing the summation by integrals and the probability mass function by the probability density function \cite{Cover:2006}.

The investigation of more general divergences and study their properties has been the object of interest of many researchers in the last decades. The interest on a different statistical divergence metric is motivated, among others, in applications related to optimization and statistical learning since more flexible functions and expressions may be suitable to larger classes of data and signals and lead to more efficient information recovery methods \cite{Hastie_2009,Principe_2010,Konishi_2008}. To cite a few, the usage of divergence metric has been considered in several domains such as statistics (including statistical physics) and learning \cite{Yamano:2009,Principe:10,Lionel:2013,Nielsen:2020}, econometrics \cite{nock2,trivellato,trivellato2,CP,Flavia:2019}, digital communications \cite{Alvarez:97,Santamaria:02,Cavalcante:02,Cavalcante:05_SP_Elsevier}, signal and image processing \cite{Yannick:2013,Fatima:2016,Rui:2019}, biomedical processing \cite{Nielsen:2011}. Also, quantum versions of generalized divergences are of interest in the literature \cite{Abe_2003,Luiza:2019}.

The general rationale on the consideration of divergence in such optimization problems is usually to derive more robust (or suitable) metrics to statistically differentiate two distributions stating how close (or how different) they are from each other. Csisz\'{a}r introduced yet another concept of divergence, the $f$-divergence defined as $\sum_{i}^{n}q_{i}f\left(\frac{p_{i}}{q_{i}}\right)$ \cite{Csiszar:2004} for any convex function $f(t)$ for $t>0$ such that $f(1)=0$. KL divergence and Tsallis relative entropy are also obtained as a particular case of the $f$-divergence. \textit{Amari $\alpha$-divergence} \cite{Amari1982,Amari1985,Amari:2000} is yet another divergence that can be seen as a special case of the $f$-divergence since such divergence reduces to the KL one, when $\alpha=\pm1$ . Bregman introduced in \cite{Bregman1967} a divergence which is induced by a convex differentiable function. In \cite{Zhang:2004} a more general expression of a divergence function was introduced, the $(\rho,\tau)$-divergence, that has as a especial case the \textit{Zhang's $\alpha$-divergence} which is based on the quasiarithmetic mean \cite{Inequalities_1952} and includes the Bregman divergence, the Amari $\alpha$-divergence and the $f$-divergence as special cases. Furthermore, more recently, the $(\rho,\tau)$-embedding was studied in \cite{Zhang_2015,Naudts_Zhang_2018} and Jain and Chhabra \cite{Chhabra_2016} introduced a new generalized divergence measure for increasing functions.

Some of the proposed generalized divergences rely on a more flexible function in order to exploit other statistical characteristics. The deformed exponentials proposed by Naudts \cite{Naudts_2002} and further investigated in the context of statistical physics in \cite{naudts5} are one of such more flexible models. The idea of those deformed functions is that they relax some conditions of the classical exponential function and expand the number of degrees of freedom one can play so aspects of heavier tails, for example, can be more easily incorporated within the same framework. This is particularly of interest in some problem in econophysics when the distribution of the risk changes due to some external aspects of the economy (unforeseen events such as pandemics and crash of stock market).

One generalization of the exponential families of probability distributions was introduced in \cite{Vigelis:2013}, with the so-called $\varphi$-families of probability distributions. This generalization was possible by replacing of the exponential function by a deformed exponential $\varphi$, with some appropriate conditions. In that work, the $\varphi$-divergence
was defined between two probability distributions in the same $\varphi$-family. The $\varphi$-divergence can be interpreted as the Bregman divergence associated to the normalizing function $\psi$, that is a convex differentiable function. Actually, the $\varphi$-divergence between two probability distributions $\mathbf{p}$ and $\mathbf{q}$, that are in
the same $\varphi$-family, is the normalizing function $\psi$, that appears when we write the probability distribution $\mathbf{q}$ as a function of $\mathbf{p}$. In others words, $\varphi$-divergence appears naturally in the theory of information geometry \cite{Korbel_2019}. Furthermore, the $\varphi$-divergence has an inherent relationship
with Zhang's $\alpha$-divergence \cite{Zhang:2004}.



R\'{e}nyi divergence \cite{Renyi:1961} is one of the most successful
measures of dissimilarity between probability distributions, having
found many applications \cite{Erven:2014}. It is given by
\begin{equation*}
\mathcal{D}^{(\alpha) }(\mathbf{p}||\mathbf{q})={\frac {1}{\alpha -1}}\ln {\Bigg (}\sum _{i=1}^{n}{\frac {p_{i}^{\alpha }}{q_{i}^{\alpha -1}}}{\Bigg )},
\end{equation*}
where $\alpha$ is the order of the entropy (a free parameter). In \cite{Souza:2016}
the authors proposed a generalization of R\'{e}nyi divergence in terms
of a deformed exponential function. In order that this generalization
be well-defined, the deformed exponential function have to satisfy
some suitable conditions, which we investigate in the present paper.
We considered the cases in which the underlying measure $\mu$ is
non-atomic or purely atomic (the counting measure on the set of natural
numbers $\mathbb{N}$). Each case required distinctive techniques,
and provided different results. If the measure $\mu$ is non-atomic,
we found that not all deformed exponential functions can be used in
the generalization of R\'{e}nyi divergence; a condition involving the
deformed exponential function is provided. In the case $\mu$ is the
counting measure on $\mathbb{N}$, we prove that any deformed exponential
function can be used to define the generalization of R\'{e}nyi divergence.
These results are found in Section~\ref{sec:existence_conditions}.
In what follows, we show how the R\'{e}nyi divergence can be generalized
in terms of a deformed exponential function; the limit cases are also
discussed.

Let $(T,\Sigma,\mu)$ be a $\sigma$-finite measure space. All probability
distributions (or probability measures) are assumed to have positive
density w.r.t.\ the underlying measure $\mu$. In other words, they
belong to the collection
\[
\mathcal{P}_{\mu}=\biggl\{ p\in L^{0}:\int_{T}pd\mu=1\text{ and }p>0\biggr\},
\]
where $L^{0}$ is the space of all real-valued, measurable functions
on $T$, with equality $\mu$-a.e.

The R\'{e}nyi divergence of order $\alpha\in(0,1)$ between probability
distributions $p$ and $q$ in $\mathcal{P}_{\mu}$ is defined as
\begin{equation}
\mathcal{D}^{(\alpha)}(p\parallel q)=\frac{\kappa(\alpha)}{\alpha(\alpha-1)},\label{eq:Renyi_divergence}
\end{equation}
where
\begin{equation}
\kappa(\alpha)=-\log\biggl(\int_{T}p^{\alpha}q^{1-\alpha}d\mu\biggr).
\label{Eq:Kappa_alpha}
\end{equation}
For $\alpha\in\{0,1\}$, the R\'{e}nyi divergence is defined by taking
a limit:
\begin{align*}
\mathcal{D}^{(0)}(p\parallel q) & =\lim_{\alpha\downarrow0}\mathcal{D}^{(\alpha)}(p\parallel q),\\
\mathcal{D}^{(1)}(p\parallel q) & =\lim_{\alpha\uparrow1}\mathcal{D}^{(\alpha)}(p\parallel q).
\end{align*}
Expression \eqref{eq:Renyi_divergence} can be used to define the
R\'{e}nyi divergence $\mathcal{D}^{(\alpha)}(\cdot\parallel\cdot)$ for
every $\alpha\in\mathbb{R}$. However, for $\alpha\notin(-1,1)$ this
expression may not be finite-valued for all $p$ and $q$ in $\mathcal{P}_{\mu}$.
To avoid some technicalities, we assume that $\alpha\in[-1,1]$. The
standard form of the R\'{e}nyi divergence found in the literature differs
from \eqref{eq:Renyi_divergence} by a factor of $1/\alpha$. We chose
to define $\mathcal{D}^{(\alpha)}(\cdot\parallel\cdot)$ as in \eqref{eq:Renyi_divergence}
so that some symmetry could be preserved when the limits $\alpha\downarrow0$
and $\alpha\uparrow1$ are taken.

The generalization of R\'{e}nyi divergence is based on an alternate interpretation
of $\kappa(\alpha)$. Fixed $\alpha\in(0,1)$, and given any $p$
and $q$ in $\mathcal{P}_{\mu}$, the function $\kappa(\alpha):=\kappa(\alpha;p,q)$
is the unique non-negative real number such that
\[
\int_{T}\exp(\alpha\ln(p)+(1-\alpha)\ln(q)+\kappa(\alpha))d\mu=1.
\]
To generalize the R\'{e}nyi divergence, we consider a deformed exponential
$\varphi(\cdot)$ in the place of the exponential function. A \textit{deformed
exponential} $\varphi\colon\mathbb{R}\rightarrow[0,\infty)$ is a
convex function such that $\lim_{u\rightarrow-\infty}\varphi(u)=0$
and $\lim_{u\rightarrow\infty}\varphi(u)=\infty$. Given any $p$
and $q$ in $\mathcal{P}_{\mu}$, we take $\kappa(\alpha)=\kappa(\alpha;p,q)\geq0$
so that
\begin{equation}
\int_{T}\varphi(\alpha\varphi^{-1}(p)+(1-\alpha)\varphi^{-1}(q)+\kappa(\alpha)u_{0})d\mu=1,\label{eq:kappa_phi}
\end{equation}
where $u_{0}\colon T\rightarrow(0,\infty)$ is a positive, measurable
function satisfying a suitable condition. The existence and uniqueness
of $\kappa(\alpha)$ as defined in \eqref{Eq:Kappa_alpha} is guaranteed
by the condition in \eqref{eq:kappa_phi}, which will be investigated in the next section.
We will show that the existence of $u_{0}$ depends on $\varphi(\cdot)$
and the underlying measure $\mu$.

We define the generalization of R\'{e}nyi divergence of order $\alpha\in(0,1)$
by
\begin{equation}
\mathcal{D}_{\varphi}^{(\alpha)}(p\parallel q)=\frac{\kappa(\alpha)}{\alpha(1-\alpha)},\label{eq:D_phi}
\end{equation}
where $\kappa(\alpha)$ is given as in \eqref{eq:kappa_phi}. For
$\alpha\in\{0,1\}$, the generalization is defined by taking a limit:
\begin{align}
\mathcal{D}_{\varphi}^{(0)}(p\parallel q) & =\lim_{\alpha\downarrow0}\mathcal{D}_{\varphi}^{(\alpha)}(p\parallel q),\label{eq:D_phi_zero}\\
\mathcal{D}_{\varphi}^{(1)}(p\parallel q) & =\lim_{\alpha\uparrow1}\mathcal{D}_{\varphi}^{(\alpha)}(p\parallel q).\label{eq:D_phi_one}
\end{align}
These limits are related to a generalization of the Kullback\textendash Leibler
divergence \cite{Kullback:1951}, the so-called $\varphi$-divergence,
which was introduced by the authors in \cite{Vigelis:2013}. The $\varphi$-divergence
is given by
\begin{equation}
\mathcal{D}_{\varphi}(p\parallel q)=\frac{\int_{T}\frac{\varphi^{-1}(p)-\varphi^{-1}(q)}{(\varphi^{-1})'(p)}d\mu}{\int_{T}\frac{u_{0}}{(\varphi^{-1})'(p)}d\mu}.\label{eq:phi_divergence}
\end{equation}
In the case $\varphi(\cdot)$ is the exponential function and $u_{0}=1$,
the $\varphi$-divergence reduces to the Kullback\textendash Leibler
divergence. Under some conditions, the limits \eqref{eq:D_phi_zero}
and \eqref{eq:D_phi_one} are finite-valued and converges to the $\varphi$-divergence:
\begin{equation}
\mathcal{D}_{\varphi}^{(0)}(q\parallel p)=\mathcal{D}_{\varphi}^{(1)}(p\parallel q)=\mathcal{D}_{\varphi}(p\parallel q)<\infty.\label{eq:D_zero_D_one_relation}
\end{equation}
These conditions are stated in Proposition~\ref{prop:D_phi_limit}
for the case involving the generalized R\'{e}nyi divergence.

\begin{prop}
\label{prop:D_phi_limit} Assume that $\varphi(\cdot)$ is continuously
differentiable. Consider the condition
\begin{equation}
\int_{T}\varphi(\alpha\varphi^{-1}(p)+(1-\alpha)\varphi^{-1}(q))d\mu<\infty.\label{eq:cond_int}
\end{equation}
If expression \eqref{eq:cond_int} is satisfied for all $\alpha\in[\alpha_{0},0)$
and some $\alpha_{0}<0$, then
\[
\mathcal{D}_{\varphi}^{(0)}(p\parallel q)=\frac{\partial\kappa}{\partial\alpha}(0)=\mathcal{D}_{\varphi}(q\parallel p)<\infty.
\]
If expression \eqref{eq:cond_int} is satisfied for all $\alpha\in(1,\alpha_{0}]$
and some $\alpha_{0}>1$, then
\[
\mathcal{D}_{\varphi}^{(1)}(p\parallel q)=-\frac{\partial\kappa}{\partial\alpha}(1)=\mathcal{D}_{\varphi}(p\parallel q)<\infty.
\]
\end{prop}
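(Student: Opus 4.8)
The plan is to identify both limits with one–sided derivatives of $\alpha\mapsto\kappa(\alpha)$ and to compute those derivatives by implicit differentiation of \eqref{eq:kappa_phi}. Since $\int_{T}p\,d\mu=\int_{T}q\,d\mu=1$ and $\varphi(\varphi^{-1}(p))=p$, $\varphi(\varphi^{-1}(q))=q$, setting $\alpha=0$ or $\alpha=1$ in \eqref{eq:kappa_phi} forces $\kappa(0)=\kappa(1)=0$ (throughout I abbreviate $\kappa(\alpha)=\kappa(\alpha;p,q)$). Consequently, using that $1/\bigl(\alpha(1-\alpha)\bigr)$ contributes only the harmless factor $1/(1-\alpha)\to1$ as $\alpha\downarrow0$ and $1/\alpha\to1$ as $\alpha\uparrow1$,
\[
\mathcal{D}_{\varphi}^{(0)}(p\parallel q)=\lim_{\alpha\downarrow0}\frac{\kappa(\alpha)-\kappa(0)}{\alpha},\qquad\mathcal{D}_{\varphi}^{(1)}(p\parallel q)=\lim_{\alpha\uparrow1}\frac{\kappa(\alpha)-\kappa(1)}{1-\alpha},
\]
provided these limits exist (they are one–sided, but will coincide with the two–sided derivatives once differentiability is established). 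So it suffices to show that $\kappa$ is differentiable at $0$ with $\frac{\partial\kappa}{\partial\alpha}(0)=\mathcal{D}_{\varphi}(q\parallel p)<\infty$, and at $1$ with $-\frac{\partial\kappa}{\partial\alpha}(1)=\mathcal{D}_{\varphi}(p\parallel q)<\infty$; these two assertions are interchanged by $p\leftrightarrow q$, $\alpha\leftrightarrow1-\alpha$, so I describe only the first.

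Write $w:=\varphi^{-1}(p)-\varphi^{-1}(q)$ and
\[
F(\alpha,c):=\int_{T}\varphi\bigl(\varphi^{-1}(q)+\alpha w+c\,u_{0}\bigr)\,d\mu,
\]
so that $\kappa(\alpha)$ is the unique solution of $F(\alpha,\kappa(\alpha))=1$ and $F(0,0)=1$. The first step is to prove that $F$ is finite and of class $C^{1}$ on a neighbourhood of $(0,0)$ in $\mathbb{R}^{2}$. Finiteness on $\{\alpha\in(0,1),\,c\le0\}$ follows from convexity and monotonicity of $\varphi$, since there $\varphi(\varphi^{-1}(q)+\alpha w+c\,u_{0})\le\varphi((1-\alpha)\varphi^{-1}(q)+\alpha\varphi^{-1}(p))\le(1-\alpha)q+\alpha p$, whence $F\le1$; on $\{\alpha\in[\alpha_{0},0),\,c=0\}$ it is exactly hypothesis \eqref{eq:cond_int}; the remaining corners (in particular $\alpha<0$ with $c>0$) reduce to these after writing the argument as a convex combination of "extreme" arguments and invoking, in addition, the condition imposed on $u_{0}$, which ensures $\varphi(\varphi^{-1}(p)+\lambda u_{0})$ and $\varphi(\varphi^{-1}(q)+\lambda u_{0})$ lie in $L^{1}(\mu)$ for $\lambda$ near $0$. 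Differentiating under the integral sign then gives
\[
\frac{\partial F}{\partial\alpha}(\alpha,c)=\int_{T}\varphi'\bigl(\varphi^{-1}(q)+\alpha w+c\,u_{0}\bigr)\,w\,d\mu,\qquad\frac{\partial F}{\partial c}(\alpha,c)=\int_{T}\varphi'\bigl(\varphi^{-1}(q)+\alpha w+c\,u_{0}\bigr)\,u_{0}\,d\mu,
\]
the interchange being justified by the same ingredients: $\varphi$ being convex and $C^{1}$, $\varphi'$ is nondecreasing and continuous, and the elementary bound $\varphi'(x)\le\varphi(x+1)-\varphi(x)$ lets one dominate $\varphi'$ on the relevant range of arguments by an $L^{1}$ function built from the $\varphi$-values already controlled (splitting $w=w^{+}-w^{-}$ and using $u_{0}>0$ to keep the shifts in the correct direction). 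Since $\partial F/\partial c>0$ (its integrand is $\ge0$ and not $\mu$-a.e.\ zero, for otherwise $F$ would vanish there), the implicit function theorem produces a $C^{1}$ map $\alpha\mapsto\kappa(\alpha)$ near $\alpha=0$ with $\kappa(0)=0$ and $F(\alpha,\kappa(\alpha))=1$; by uniqueness of the solution of \eqref{eq:kappa_phi} this agrees with $\kappa$ for $\alpha\ge0$ and extends it to $\alpha<0$.

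It then remains to evaluate $\frac{\partial\kappa}{\partial\alpha}(0)=-\partial_{\alpha}F(0,0)/\partial_{c}F(0,0)$. At $(0,0)$ the argument of $\varphi'$ is $\varphi^{-1}(q)$, and the inverse–function identity $\varphi'(\varphi^{-1}(q))=1/(\varphi^{-1})'(q)$ turns this quotient into
\[
\frac{\partial\kappa}{\partial\alpha}(0)=-\frac{\displaystyle\int_{T}\frac{\varphi^{-1}(p)-\varphi^{-1}(q)}{(\varphi^{-1})'(q)}\,d\mu}{\displaystyle\int_{T}\frac{u_{0}}{(\varphi^{-1})'(q)}\,d\mu}=\frac{\displaystyle\int_{T}\frac{\varphi^{-1}(q)-\varphi^{-1}(p)}{(\varphi^{-1})'(q)}\,d\mu}{\displaystyle\int_{T}\frac{u_{0}}{(\varphi^{-1})'(q)}\,d\mu},
\]
which is finite (numerator and denominator were shown integrable in the previous step) and is precisely \eqref{eq:phi_divergence} with $p$ and $q$ interchanged, i.e.\ $\mathcal{D}_{\varphi}(q\parallel p)$. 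Combined with the first paragraph this gives $\mathcal{D}_{\varphi}^{(0)}(p\parallel q)=\frac{\partial\kappa}{\partial\alpha}(0)=\mathcal{D}_{\varphi}(q\parallel p)<\infty$; the case $\alpha=1$ is entirely parallel, using \eqref{eq:cond_int} on $(1,\alpha_{0}]$ and the argument value $\varphi^{-1}(p)$ at $(1,0)$, and yields $\mathcal{D}_{\varphi}^{(1)}(p\parallel q)=-\frac{\partial\kappa}{\partial\alpha}(1)=\mathcal{D}_{\varphi}(p\parallel q)<\infty$.

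The step I expect to be the genuine obstacle is the uniform $L^{1}$-domination underlying both the $C^{1}$ regularity of $F$ and the differentiation under the integral sign: one must exhibit a single $\mu$-integrable majorant of $\varphi'\bigl(\varphi^{-1}(q)+\alpha w+c\,u_{0}\bigr)\bigl(|w|+u_{0}\bigr)$ valid for all $(\alpha,c)$ in a fixed box about the origin. Convexity of $\varphi$ is what makes this feasible, reducing control of $\varphi$ — and hence, through a difference quotient, of $\varphi'$ — at an arbitrary point of the box to control at its four corners; it is exactly here that hypothesis \eqref{eq:cond_int}, which says that $w=\varphi^{-1}(p)-\varphi^{-1}(q)$ points in a direction along which $\varphi(\varphi^{-1}(q)+\alpha w)$ stays integrable for slightly negative $\alpha$, together with the condition on $u_{0}$, is used; everything else is bookkeeping.
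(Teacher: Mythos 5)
The paper itself contains no proof of this proposition---it defers to Lemma~4 and Proposition~5 of the cited reference---so there is no in-text argument to compare yours against; I can only assess it on its own terms, and on those terms it is correct in strategy and in every computation I can check. The reductions $\kappa(0)=\kappa(1)=0$, the identification of the limits \eqref{eq:D_phi_zero}--\eqref{eq:D_phi_one} with one-sided derivatives of $\kappa$, the implicit-differentiation formula $\kappa'(0)=-\partial_{\alpha}F(0,0)/\partial_{c}F(0,0)$, the identity $\varphi'(\varphi^{-1}(q))=1/(\varphi^{-1})'(q)$, and the resulting match with \eqref{eq:phi_divergence} after swapping $p$ and $q$ are all right, as is your finiteness argument for $F$ near $(0,0)$ (convexity for $\alpha\in[0,1]$, hypothesis \eqref{eq:cond_int} for $\alpha<0$, and a convex-combination splitting together with \eqref{eq:u0_condition} for $c>0$). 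The one step you leave as a sketch---producing a single integrable majorant of $\varphi'(\varphi^{-1}(q)+\alpha w+cu_{0})(|w|+u_{0})$ over a whole box, which you need for joint $C^{1}$ regularity and hence for the implicit function theorem---is a genuine obligation as you have set things up, but it can be bypassed: $F$ is jointly convex in $(\alpha,c)$, since its integrand is $\varphi$ composed with an affine map of $(\alpha,c)$, and a finite convex function on an open set whose partial derivatives exist everywhere is automatically differentiable with continuous gradient. The existence and formula for the partials then need no uniform domination at all: in each variable separately the difference quotients of the integrand decrease monotonically (by convexity of $\varphi$) to $\varphi'(\cdot)\,w$, respectively $\varphi'(\cdot)\,u_{0}$, and are sandwiched between the quotients at two fixed increments of opposite sign, each integrable because $F$ is finite at the corresponding points; monotone convergence gives both the interchange and the finiteness $\int_{T}\varphi'(\varphi^{-1}(q))|w|\,d\mu<\infty$, which is exactly where \eqref{eq:cond_int} for negative $\alpha$ enters, as you anticipated. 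With that substitution your argument closes completely; the rest (including the sign in $-\frac{\partial\kappa}{\partial\alpha}(1)=\mathcal{D}_{\varphi}(p\parallel q)$ and the strict positivity of $\partial_{c}F(0,0)$ needed for the quotient) is in order.
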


Notice that expression \eqref{eq:cond_int} always holds for $\alpha\in[0,1]$,
since $\varphi(\cdot)$ is convex. For a proof of Proposition~\ref{prop:D_phi_limit}
we refer to Lemma~4 and Proposition~5 in \cite{Souza:2016}.

\section{Existence Conditions\label{sec:existence_conditions}}

The generalization of R\'{e}nyi divergence requires that $\kappa(\alpha)$
be well-defined. To guarantee the existence and uniqueness of $\kappa(\alpha)$
as defined by \eqref{eq:kappa_phi}, we assume that there exists a
measurable function $u_{0}\colon T\rightarrow(0,\infty)$ such that
\begin{equation}
\int_{T}\varphi(c+\lambda u_{0})d\mu<\infty,\qquad\text{for all }\lambda>0,\label{eq:u0_condition}
\end{equation}
for each measurable function $c\colon T\rightarrow\mathbb{R}$ satisfying
$\int_{T}\varphi(c)d\mu<\infty$. The existence of $u_{0}$ depends
on the deformed exponential $\varphi(\cdot)$ and the underlying measure
$\mu$. In the case $\mu$ is non-atomic, not all deformed exponential
functions admit the existence of a function $u_{0}$ satisfying \eqref{eq:u0_condition}.
(A measure $\mu$ is said to be \textit{non-atomic} if for any measurable
set $A$ with $\mu(A)>0$ there exists a measurable subset $B\subset A$
such that $\mu(A)>\mu(B)>0$.) We shall find a condition involving
solely $\varphi(\cdot)$ which is equivalent to the existence of $u_{0}$.
If $\mu$ is the counting measure on the set of natural numbers $T=\mathbb{N}$,
we will show that, for any deformed exponential function $\varphi(\cdot)$,
always there exists a function $u_{0}$ (to be more precise, a sequence)
satisfying \eqref{eq:u0_condition}.

Many deformed exponential functions $\varphi(\cdot)$ can be used
in the generalization of R\'{e}nyi divergence. A standard example is the
exponential function, which satisfies condition \eqref{eq:u0_condition}
for $u_{0}=1$. Another example is the Kaniadakis' $\kappa$-exponential
\cite{Kaniadakis:2002,Vigelis:2013}. For the deformed exponential
function given below, we cannot find a function $u_{0}$ for which
condition \eqref{eq:u0_condition} holds.

\begin{example}
\label{exa:u0_not_satisfied} Let us consider the deformed exponential
function
\[
\varphi(u)=\begin{cases}
e^{(u+1)^{2}/2}, & u\geq0,\\
e^{(u+\frac{1}{2})}, & u\leq0.
\end{cases}
\]
Assume that the underlying measure $\mu$ is non-atomic. Given any
measurable function $u_{0}\colon T\rightarrow(0,\infty)$, we can
find a measurable function $c\colon T\rightarrow\mathbb{R}$ with
$\int_{T}\varphi(c)d\mu<\infty$, for which condition \eqref{eq:u0_condition}
does not hold. This claim was proved by the authors in \cite[Example 2]{Souza:2016}.
An alternate proof of this result follows from a proposition (which
will be shown in this section) involving the existence of $u_{0}$.
\end{example}

The next result shows that condition \eqref{eq:u0_condition} is appropriate
for the existence of $\kappa(\alpha)$, since they are equivalent.

\begin{prop}
Assume that the measure $\mu$ is non-atomic. Fix any $\alpha\in(0,1)$.
A deformed exponential $\varphi\colon\mathbb{R}\rightarrow[0,\infty)$
and a measurable function $u_{0}\colon T\rightarrow(0,\infty)$ satisfy
 condition \eqref{eq:u0_condition} if, and only if, for each probability
distributions $p$ and $q$ in $\mathcal{P}_{\mu}$, there exists
a constant $\kappa(\alpha):=\kappa(\alpha;p,q)$ such that
\begin{equation}
\int_{T}\varphi(\alpha\varphi^{-1}(p)+(1-\alpha)\varphi^{-1}(q)+\kappa(\alpha)u_{0})d\mu=1.\label{eq:kappa_phi_prop}
\end{equation}
\end{prop}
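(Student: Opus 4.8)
I will prove the two implications separately; the direction ``\eqref{eq:u0_condition} $\Rightarrow$ existence of $\kappa(\alpha)$'' is a routine monotonicity/intermediate-value argument, while the converse is the substantive part and is handled in contrapositive form by a measure-theoretic construction.

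\emph{From \eqref{eq:u0_condition} to the existence of $\kappa(\alpha)$.} Fix $\alpha\in(0,1)$ and $p,q\in\mathcal{P}_{\mu}$ and set $c=\alpha\varphi^{-1}(p)+(1-\alpha)\varphi^{-1}(q)$, a measurable real-valued function. Convexity of $\varphi$ gives $\varphi(c)\le\alpha p+(1-\alpha)q$, so $\int_{T}\varphi(c)\,d\mu\le1$, and \eqref{eq:u0_condition} then forces $F(\kappa):=\int_{T}\varphi(c+\kappa u_{0})\,d\mu<\infty$ for every $\kappa\ge0$. Since $\varphi$, being convex with $\varphi\ge0$ and $\varphi(-\infty)=0$, is non-decreasing, $F$ is non-decreasing; $F$ is continuous on $[0,\infty)$ by dominated convergence (on $[0,\kappa_{0}]$ dominate by the integrable $\varphi(c+\kappa_{0}u_{0})$); $F(0)\le1$; and $F(\kappa)\to\infty$, because on a set $A$ of positive finite measure where $u_{0}\ge\delta>0$ and $c\ge-M$ one has $\varphi(c+\kappa u_{0})\ge\varphi(\kappa\delta-M)\to\infty$. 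The intermediate value theorem produces $\kappa(\alpha)\ge0$ with $F(\kappa(\alpha))=1$, i.e.\ \eqref{eq:kappa_phi_prop}.

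\emph{Reduction for the converse.} Suppose \eqref{eq:u0_condition} fails; I will exhibit $p,q\in\mathcal{P}_{\mu}$ for which no $\kappa$ solves \eqref{eq:kappa_phi_prop}. For a candidate $d=\alpha\varphi^{-1}(p)+(1-\alpha)\varphi^{-1}(q)$ the function $F$ above has $F(0)\le1$, is non-decreasing, and tends to $\infty$; hence if I can arrange $\int_{T}\varphi(d)\,d\mu<1$ together with $\int_{T}\varphi(d+\eta u_{0})\,d\mu=\infty$ for every $\eta>0$, then $F$ equals $1$ nowhere and we are done. So it suffices to build such a $d$ that is \emph{attainable}, meaning $d=\alpha a+(1-\alpha)b$ with $\int_{T}\varphi(a)\,d\mu=\int_{T}\varphi(b)\,d\mu=1$. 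The heart of the matter is to first produce a single measurable function $e$ with $\int_{T}\varphi(e)\,d\mu<1$ but $\int_{T}\varphi(e+\eta u_{0})\,d\mu=\infty$ for all $\eta>0$. The failure of \eqref{eq:u0_condition} propagates to all scales: the set $S=\{\lambda>0:\exists c,\ \int_{T}\varphi(c)\,d\mu<\infty,\ \int_{T}\varphi(c+\lambda u_{0})\,d\mu=\infty\}$ is nonempty by hypothesis and upward-closed (monotonicity of $\varphi$), hence $S=(\Lambda,\infty)$ or $[\Lambda,\infty)$ with $\Lambda=\inf S$; if $\Lambda>0$, choose a witness $c$ at scale $\lambda<\tfrac32\Lambda$, note that $\mu_{0}:=\lambda-\tfrac\Lambda2\in(0,\Lambda)$ is not a scale for $c$ so $\int_{T}\varphi(c+\mu_{0}u_{0})\,d\mu<\infty$, while adding $\tfrac\Lambda2u_{0}$ blows it up, putting $\tfrac\Lambda2$ in $S$ and contradicting $\Lambda=\inf S$. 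Thus $\Lambda=0$ and $S\supseteq(0,\infty)$.

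\emph{The crux: constructing $e$.} By the above, for each $n$ there is $c_{n}$ with $\int_{T}\varphi(c_{n})\,d\mu<\infty$ and $\int_{T}\varphi(c_{n}+\tfrac1n u_{0})\,d\mu=\infty$. Now localize each $c_{n}$ while shedding its mass: the finite measure $A\mapsto\int_{A}\varphi(c_{n})\,d\mu$ is absolutely continuous with respect to the $\sigma$-finite measure $A\mapsto\int_{A}\varphi(c_{n}+\tfrac1n u_{0})\,d\mu$ (since $\varphi$ is non-decreasing), and its Radon--Nikodym density takes values in $[0,1]$ with superlevel sets of finite second-mass; so a sublevel set $G_{n}$ satisfies $\int_{G_{n}}\varphi(c_{n}+\tfrac1n u_{0})\,d\mu=\infty$ yet $\int_{G_{n}}\varphi(c_{n})\,d\mu<\varepsilon_{n}$, and, redefining $c_{n}$ off $G_{n}$ to be very negative (using $\sigma$-finiteness), we may assume $\int_{T}\varphi(c_{n})\,d\mu<2\varepsilon_{n}$. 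Finally put $e:=\sup_{n}\bigl(c_{n}-\tfrac1n u_{0}\bigr)$. Since $\varphi$ is non-decreasing and continuous, $\varphi(e)=\sup_{n}\varphi(c_{n}-\tfrac1n u_{0})\le\sum_{n}\varphi(c_{n})$, so $\int_{T}\varphi(e)\,d\mu<\sum_{n}2\varepsilon_{n}<1$ once the $\varepsilon_{n}$ are chosen with small sum; and for any $\eta>0$, picking $n$ with $\tfrac2n<\eta$ gives $e+\eta u_{0}\ge c_{n}+\tfrac1n u_{0}$ on $G_{n}$, whence $\int_{T}\varphi(e+\eta u_{0})\,d\mu=\infty$.

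\emph{Realization and the main obstacle.} Making $e$ attainable is routine normalization exploiting non-atomicity and the fact that, when \eqref{eq:u0_condition} fails, $\varphi$ cannot be affine on its increasing part and is therefore strictly convex across some interval there (if $\varphi$ were affine with slope $m$ on $\{\varphi>0\}$, then $\varphi(c+\lambda u_{0})\le\varphi(c)+m\lambda u_{0}$ and \eqref{eq:u0_condition} would hold with any $u_{0}\in L^{1}(\mu)$). After replacing $e$ by $\max(e,\varphi^{-1}(\delta r_{0}))$ for a fixed $r_{0}\in\mathcal{P}_{\mu}$ and small $\delta>0$ (so $\varphi(e)>0$ a.e.\ and still $\int_{T}\varphi(e)\,d\mu<1$), pick $W\subseteq\{u_{0}\le k\}\cap\{|e|\le k\}$ with $0<\mu(W)<\infty$ small, split $W$ into halves $W_{1},W_{2}$ of equal measure, and let $a,b$ equal $e$ off $W$, with $(a,b)=(\gamma_{1},\gamma_{2})$ on $W_{1}$ and $(\gamma_{2},\gamma_{1})$ on $W_{2}$, where $\gamma_{1}\ne\gamma_{2}$ lie in the increasing part with $\varphi$ non-affine on $[\gamma_{1},\gamma_{2}]$ and $(\varphi(\gamma_{1})+\varphi(\gamma_{2}))\mu(W_{1})=1-\int_{T\setminus W}\varphi(e)\,d\mu$ (solvable for small $\mu(W)$). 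Then $\int_{T}\varphi(a)\,d\mu=\int_{T}\varphi(b)\,d\mu=1$, so $p:=\varphi(a),q:=\varphi(b)\in\mathcal{P}_{\mu}$; strict convexity on $W$ gives $\int_{T}\varphi(d)\,d\mu<1$ for $d:=\alpha a+(1-\alpha)b$; and since $W$ is bounded of finite measure it cannot absorb the blow-up, so $\int_{T}\varphi(d+\eta u_{0})\,d\mu\ge\int_{T\setminus W}\varphi(e+\eta u_{0})\,d\mu=\infty$ for all $\eta>0$. Thus no $\kappa$ solves \eqref{eq:kappa_phi_prop} for this $p,q$, contradicting the assumed existence of $\kappa(\alpha)$. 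The only genuinely hard step is the construction of $e$ — turning one instance of the failure of \eqref{eq:u0_condition} into a function with small $\varphi$-integral that nevertheless blows up under the addition of $\eta u_{0}$ for \emph{every} positive $\eta$; both the scale-propagation ($\Lambda=0$) and the Radon--Nikodym localization are needed, and it is precisely here that non-atomicity (with $\sigma$-finiteness) enters, in order to carve out the sets $G_{n}$ and $W$ carrying prescribed small amounts of the finite measures $\int_{\cdot}\varphi(c_{n})\,d\mu$.
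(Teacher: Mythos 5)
Your forward direction matches the paper's (monotone/dominated convergence plus the intermediate value theorem). Your converse takes a genuinely different route: instead of the paper's two-case analysis on whether the critical threshold $\lambda_{0}$ is attained (cases \eqref{eq:u0_not_sat_leq} and \eqref{eq:u0_not_sat_less}), you first show that the set of bad scales has infimum zero, then use a Radon--Nikodym localization to strip each witness $c_{n}$ down to a set $G_{n}$ on which it carries $\varphi$-mass less than $\varepsilon_{n}$ but still blows up at scale $\tfrac1n$, and finally glue via $e=\sup_{n}(c_{n}-\tfrac1n u_{0})$. That construction is correct: the scale-propagation argument is clean, and the sublevel-set step does work because the $\varphi(c_{n})\,d\mu$-mass of the sublevel sets of the density tends to $0$ by dominated convergence while the superlevel sets have finite mass under $\varphi(c_{n}+\tfrac1n u_{0})\,d\mu$, leaving the sublevel sets with infinite mass. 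This handles uniformly what the paper treats by a case split and an annulus construction, and is arguably tidier.

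The gap is in the realization step. You need $\int_{T}\varphi(\alpha a+(1-\alpha)b)\,d\mu<1$ \emph{strictly}, and you secure it by asserting that failure of \eqref{eq:u0_condition} forces $\varphi$ to be non-affine somewhere on $\{\varphi>0\}$. That implication is false: \eqref{eq:u0_condition} can fail because $u_{0}$ is not integrable, not because $\varphi$ grows too fast, so your parenthetical (which only rules out affineness when $u_{0}\in L^{1}(\mu)$) does not apply to the given $u_{0}$. Concretely, take $\varphi(u)=\max(u,0)$, $T=[0,\infty)$ with Lebesgue measure, $u_{0}\equiv1$: the function $c\equiv0$ violates \eqref{eq:u0_condition}, yet $\varphi$ is affine on $\{\varphi>0\}$, so no $\gamma_{1}\neq\gamma_{2}$ with $\varphi(\gamma_{1}),\varphi(\gamma_{2})>0$ and $\varphi$ non-affine on $[\gamma_{1},\gamma_{2}]$ exists. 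Worse, in this example $\varphi^{-1}(p)=p$ and $\int_{T}(\alpha p+(1-\alpha)q+\kappa)^{+}d\mu$ equals $1$ at $\kappa=0$ for every $p,q\in\mathcal{P}_{\mu}$, so $\kappa(\alpha)$ exists for all $p,q$ even though \eqref{eq:u0_condition} fails: the strict inequality you need is not merely unproved but unobtainable, and the equivalence itself breaks down for such $\varphi$. (You have in fact isolated a point the paper's own proof passes over: it never verifies that the integral at $\kappa=0$ is strictly below $1$, which likewise needs strict convexity somewhere.) To close the argument one must impose a hypothesis excluding $\varphi$ affine on $\{\varphi>0\}$ (e.g.\ strict convexity, as is usual for deformed exponentials); under such a hypothesis your realization step, and with it the whole proof, goes through.
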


\begin{proof}
If condition \eqref{eq:u0_condition} is satisfied, the existence
and uniqueness of $\kappa(\alpha)$ follows from the Monotone Convergence
Theorem and the continuity of $\varphi(\cdot)$.

Suppose that condition \eqref{eq:u0_condition} does not hold. In
this case, for some measurable function $c\colon T\rightarrow\mathbb{R}$
with $\int_{T}\varphi(c)d\mu<\infty$, and some $\lambda_{0}\geq0$,
we have
\begin{equation}
\left\{ \begin{alignedat}{2}\int_{T}\varphi(c+\lambda u_{0})d\mu & <\infty, &  & \qquad\text{for }0\leq\lambda\leq\lambda_{0},\\
\int_{T}\varphi(c+\lambda u_{0})d\mu & =\infty, &  & \qquad\text{for }\lambda_{0}<\lambda,
\end{alignedat}
\right.\label{eq:u0_not_sat_leq}
\end{equation}
or
\begin{equation}
\left\{ \begin{alignedat}{2}\int_{T}\varphi(c+\lambda u_{0})d\mu & <\infty, &  & \qquad\text{for }0\leq\lambda<\lambda_{0},\\
\int_{T}\varphi(c+\lambda u_{0})d\mu & =\infty, &  & \qquad\text{for }\lambda_{0}\leq\lambda.
\end{alignedat}
\right.\label{eq:u0_not_sat_less}
\end{equation}
Notice that \eqref{eq:u0_not_sat_less} cannot be satisfied for $\lambda_{0}=0$.
Let $\{T_{n}\}$ be a sequence of non-decreasing, measurable sets
with $\mu(T_{n})<\infty$ and $\mu(T\setminus\bigcup_{n=1}^{\infty}T_{n})=0$.
Define $A_{n}=T_{n}\cap\{c\leq n\}\cap\{u_{0}\leq n\}$, for each
$n\geq1$. Clearly, the sequence $\{A_{n}\}$ is non-decreasing and
satisfies $\mu(A_{n})<\infty$ and $\mu(T\setminus\bigcup_{n=1}^{\infty}A_{n})=0$.
Moreover,
\[
\int_{A_{n}}\varphi(c+\lambda u_{0})d\mu\leq\varphi(n+\lambda n)\mu(A_{n})<\infty,
\]
for all $\lambda>0$, and each $n\geq1$.

If the function $u_{0}$ satisfies \eqref{eq:u0_not_sat_leq}, we
select a sufficiently large $n_{0}\geq1$ such that $\int_{T\setminus A_{n_{0}}}\varphi(c+\lambda_{0}u_{0})d\mu<1$.
Denote $B:=T\setminus A_{n_{0}}$. Let $b_{1},b_{2}\colon T\rightarrow\mathbb{R}$
be measurable functions for which $p=\varphi(c_{1})$ and $q=\varphi(c_{2})$
are in $\mathcal{P}_{\mu}$, where $c_{1}=b_{1}\chi_{T\setminus B}+(c+\lambda_{0}u_{0})\chi_{B}$
and $c_{2}=b_{2}\chi_{T\setminus B}+(c+\lambda_{0}u_{0})\chi_{B}$.
Moreover, we assume $b_{1}\chi_{T\setminus B}\neq b_{2}\chi_{T\setminus B}$.
For any $\lambda>0$, we can write
\begin{multline*}
\int_{T}\varphi(\alpha\varphi^{-1}(p)+(1-\alpha)\varphi^{-1}(q)+\lambda u_{0})\geq\int_{B}\varphi(c+(\lambda_{0}+\lambda)u_{0})d\mu\\
=\int_{T}\varphi(c+(\lambda_{0}+\lambda)u_{0})d\mu-\int_{A_{n_{0}}}\varphi(c+(\lambda_{0}+\lambda)u_{0})d\mu=\infty.
\end{multline*}
Thus, the constant $\kappa(\alpha)$, as defined by \eqref{eq:kappa_phi_prop},
cannot be found.

Now suppose that \eqref{eq:u0_not_sat_less} is satisfied. Let $\{\lambda_{n}\}$
be a sequence in $(0,\lambda_{0})$ such that $\lambda_{n}\uparrow\lambda_{0}$.
We define inductively an increasing sequence $\{k_{n}\}\subseteq\mathbb{N}$
as follows. Choose $k_{1}\geq1$ such that $\int_{A_{k_{1}}}\varphi(c+\lambda_{0}u_{0})d\mu\geq1$
and $\int_{A_{k_{1}}}\varphi(c+\lambda_{1}u_{0})d\mu\leq2^{-(1+1)}$.
Given $k_{n-1}$ we select some $k_{n}>k_{n-1}$ such that
\[
\int_{A_{k_{n}}\setminus A_{k_{n-1}}}\varphi(c+\lambda_{0}u_{0})d\mu\geq1
\]
and
\[
\int_{A_{k_{n}}\setminus A_{k_{n-1}}}\varphi(c+\lambda_{n}u_{0})d\mu\leq2^{-(n+1)}.
\]
Let us denote $B_{1}=A_{k_{1}}$ and $B_{n}=A_{k_{n}}\setminus A_{k_{n-1}}$
for $n>1$. Notice that the sets $B_{n}$ are pairwise disjoint. Define
$u=\sum_{n=1}^{\infty}\lambda_{n}u_{0}\chi_{B_{n}}$ and $B=\bigcup_{n=1}^{\infty}B_{n}$.
As a result of this construction, it follows that
\[
\int_{B}\varphi(c+u)d\mu\leq\frac{1}{2}.
\]
Let $b_{1},b_{2}\colon T\rightarrow\mathbb{R}$ be measurable functions
for which $p=\varphi(c_{1})$ and $q=\varphi(c_{2})$ are in $\mathcal{P}_{\mu}$,
where $c_{1}=b_{1}\chi_{T\setminus B}+(c+u)\chi_{B}$ and $c_{2}=b_{2}\chi_{T\setminus B}+(c+u)\chi_{B}$.
In addition, we assume $b_{1}\chi_{T\setminus B}\neq b_{2}\chi_{T\setminus B}$.
Fixed arbitrary $\lambda>0$, we take $n_{1}\geq1$ such that $\lambda_{n}+\lambda\geq\lambda_{0}$
for all $n\geq n_{1}$. Observing that $\int_{B_{n}}\varphi(c+\lambda_{0}u_{0})d\mu\geq1$,
we can write
\begin{multline*}
\int_{T}\varphi(\alpha\varphi^{-1}(p)+(1-\alpha)\varphi^{-1}(q)+\lambda u_{0})d\mu\geq\int_{B}\varphi(c+u+\lambda u_{0})d\mu\\
\geq\sum_{n=n_{1}}^{\infty}\int_{B_{n}}\varphi(c+(\lambda_{n}+\lambda)u_{0})d\mu\geq\sum_{n=n_{1}}^{\infty}1=\infty,
\end{multline*}
which shows that $\kappa(\alpha)$ cannot be found.
\end{proof}

The analysis concerning the existence of $u_{0}$ implicates the use
of different techniques, which depend on the measure $\mu$ be non-atomic
or purely atomic (the counting measure on the set of natural numbers
$T=\mathbb{N}$).

\subsection{Non-atomic case}

As shown in Example~\ref{exa:u0_not_satisfied}, where the measure
$\mu$ was assumed to be non-atomic, not all deformed exponential
functions accept the existence of a function $u_{0}$ satisfying \eqref{eq:u0_condition}.
Supposing that $\mu$ is non-atomic, we will present an equivalent
criterion for a deformed exponential function and a function $u_{0}$
to satisfy condition \eqref{eq:u0_condition}. Using this result,
we will find a condition involving solely $\varphi(\cdot)$ which
is equivalent to the existence of $u_{0}$. Throughout this subsection,
we assume that the measure $\mu$ is non-atomic.

\begin{prop}
\label{prop:condition_ineq} A deformed exponential $\varphi\colon\mathbb{R}\rightarrow[0,\infty)$
and a measurable function $u_{0}\colon T\rightarrow(0,\infty)$ satisfy
condition \eqref{eq:u0_condition} if, and only if, for some constant
$\alpha\in(0,1)$, we can find a measurable function $c\colon T\rightarrow\mathbb{R}\cup\{-\infty\}$
such that $\int_{T}\varphi(c)d\mu<\infty$ and
\begin{equation}
\alpha\varphi(u)\leq\varphi(u-u_{0}(t)),\qquad\text{for all }u\geq c(t),\label{eq:ineq_condition_lemma_2}
\end{equation}
for $\mu$-a.e.\ $t\in T$.
\end{prop}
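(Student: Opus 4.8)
The plan is to prove the two implications of Proposition~\ref{prop:condition_ineq} separately, with the bulk of the work going into the "if" direction (the pointwise inequality \eqref{eq:ineq_condition_lemma_2} implies the integrability condition \eqref{eq:u0_condition}), while the "only if" direction will rely on a measurable-selection / exhaustion argument that exploits non-atomicity much as in the previous proposition.

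For the "if" direction, suppose we are given $\alpha\in(0,1)$ and a measurable $c$ with $\int_T\varphi(c)\,d\mu<\infty$ satisfying $\alpha\varphi(u)\le\varphi(u-u_0(t))$ for all $u\ge c(t)$ and a.e.\ $t$. First I would iterate the inequality: for each positive integer $k$ and $u\ge c(t)$ one gets $\alpha^k\varphi(u)\le\varphi(u-ku_0(t))$, equivalently $\varphi(u+ku_0(t))\le\alpha^{-k}\varphi(u)$ whenever $u+ku_0(t)\ge c(t)$, i.e.\ for all $u\ge c(t)$ (since $u_0>0$). Now take any measurable $\tilde c$ with $\int_T\varphi(\tilde c)\,d\mu<\infty$; I want to bound $\int_T\varphi(\tilde c+\lambda u_0)\,d\mu$ for arbitrary $\lambda>0$. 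Split $T$ according to whether $\tilde c\ge c$ or $\tilde c<c$. On $\{\tilde c<c\}$, convexity and monotonicity of $\varphi$ let me dominate $\varphi(\tilde c+\lambda u_0)$ by something controlled by $\varphi(c+\lambda u_0)$, so it suffices to handle the case $\tilde c\ge c$; there, choosing an integer $k\ge\lambda$ gives $\varphi(\tilde c+\lambda u_0)\le\varphi(\tilde c+ku_0)\le\alpha^{-k}\varphi(\tilde c)$, which is integrable. The genuinely delicate point is the region where $c=-\infty$ or $\tilde c$ is very negative: here one must be careful that $\varphi(\tilde c+\lambda u_0)$ is still dominated — but since $\varphi\ge 0$ and $\varphi$ is nondecreasing with $\varphi(-\infty)=0$, on $\{\tilde c<c\}$ one has the crude bound $\varphi(\tilde c+\lambda u_0)\le\varphi(c+\lambda u_0)$ when $\lambda u_0$ pushes the argument above $c$, and $\le\varphi(\tilde c+\lambda u_0)$ stays small otherwise; assembling these pieces carefully is the main obstacle and where I expect to need an auxiliary lemma or a more refined decomposition of $T$ into $\{c\le n\}\cap\{u_0\le n\}\cap T_n$ as in the preceding proof.

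For the "only if" direction, assume \eqref{eq:u0_condition} holds; I must produce, for some (equivalently, a fixed) $\alpha\in(0,1)$, a function $c$ with $\int_T\varphi(c)\,d\mu<\infty$ and $\alpha\varphi(u)\le\varphi(u-u_0(t))$ for all $u\ge c(t)$. Define, for each $t$, the threshold $c(t)=\inf\{v\in\mathbb R:\alpha\varphi(u)\le\varphi(u-u_0(t))\text{ for all }u\ge v\}$ (with $c(t)=-\infty$ if the inequality holds for all $u$). Convexity of $\varphi$ guarantees that the ratio $\varphi(u-u_0(t))/\varphi(u)$ is nondecreasing in $u$ on the region where $\varphi(u)>0$, so this $c(t)$ is well-defined and measurable in $t$. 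The task is then to show $\int_T\varphi(c)\,d\mu<\infty$. The idea is contrapositive: if $\int_T\varphi(c)\,d\mu=\infty$, then on a set where $c$ is large the inequality $\alpha\varphi(c(t))>\varphi(c(t)-u_0(t))$ is nearly tight, and one can build — using non-atomicity to carve out pieces of prescribed measure — a function $\bar c\le c$ with $\int_T\varphi(\bar c)\,d\mu<\infty$ but $\int_T\varphi(\bar c+\lambda u_0)\,d\mu=\infty$ for suitable $\lambda$, contradicting \eqref{eq:u0_condition}. This mirrors the construction in the previous proposition (splitting into countably many disjoint pieces $B_n$ of controlled $\varphi$-mass), and non-atomicity is exactly what allows the mass-calibration on each piece.

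The main obstacle, as flagged, is the bookkeeping in the "if" direction around negative/infinite values of $c$ and the uniformity over all $\lambda>0$ simultaneously — converting the one-step inequality into a bound valid for every $\lambda$ requires the iteration $\varphi(u+ku_0)\le\alpha^{-k}\varphi(u)$ to hold on a common domain $\{u\ge c(t)\}$ independent of $k$, which is true but must be checked against the monotonicity of $\varphi$ at the point where the argument could drop below the region $\{\varphi>0\}$. Once that uniform iterated bound is established, the integrability of $\varphi(\tilde c+\lambda u_0)$ follows by dominated convergence against $\alpha^{-\lceil\lambda\rceil}\varphi(\tilde c)$ plus the easy estimate on $\{\tilde c<c\}$.
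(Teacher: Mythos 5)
Your ``if'' direction (pointwise inequality \eqref{eq:ineq_condition_lemma_2} $\Rightarrow$ condition \eqref{eq:u0_condition}) is essentially the paper's argument and is fine: the paper splits on $A=\{\widetilde{c}+u_{0}\geq c\}$ and iterates the one-step integral bound, whereas you split on $\{\widetilde{c}\geq c\}$ and iterate the pointwise bound first, but both work, and the ``delicate point'' you flag on $\{\widetilde{c}<c\}$ is actually harmless ($\varphi(\widetilde{c}+\lambda u_{0})\leq\varphi(c+\lambda u_{0})\leq\alpha^{-k}\varphi(c)$ there, and this set is empty where $c=-\infty$).

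The ``only if'' direction has a genuine gap, and it is located precisely in your parenthetical ``for some (equivalently, a fixed) $\alpha$.'' The two are not equivalent: the statement for some $\alpha$ is equivalent to the statement for all \emph{sufficiently small} $\alpha$, not for an arbitrarily chosen one. Concretely, take $\varphi=\exp$, $u_{0}=1$ and $\alpha=1/2$: then $\alpha\varphi(u)\leq\varphi(u-1)$ fails for \emph{every} $u$, so your threshold $c(t)=\inf\{v:\dots\}$ is $+\infty$ and $\int_{T}\varphi(c)\,d\mu=\infty$, even though \eqref{eq:u0_condition} holds. So the conclusion you aim to prove for a fixed $\alpha$ is simply false, and the contradiction you sketch cannot close. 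The quantitative reason it cannot close is this: with a fixed $\alpha$, near-tightness of $\alpha\varphi(c(t))>\varphi(c(t)-u_{0}(t))$ only gives $\int_{A_{n}}\varphi(\bar{c}_{n}-u_{0})\,d\mu\lesssim\alpha\int_{A_{n}}\varphi(\bar{c}_{n})\,d\mu$ on each carved-out piece, so if $\sum_{n}\int_{A_{n}}\varphi(\bar{c}_{n})\,d\mu=\infty$ then the shifted sum is \emph{also} infinite and no function $\bar{c}$ with $\int\varphi(\bar{c})<\infty$ but $\int\varphi(\bar{c}+\lambda u_{0})=\infty$ emerges. The paper's Lemma~\ref{lem:technical} gets around this exactly by negating over \emph{all} $\alpha$ and running the construction along $\alpha_{m}=2^{-m}\downarrow0$, which yields pieces with $\int_{A_{n}}\varphi(c_{n})\,d\mu=1$ but $\int_{A_{n}}\varphi(c_{n}-u_{0})\,d\mu\leq2^{-m_{n}}$ summable; that decay is what manufactures the contradiction with \eqref{eq:u0_condition} at $\lambda=1$, and your fixed-$\alpha$ version has no analogue of it. (Separately, your justification that ``convexity of $\varphi$ guarantees the ratio $\varphi(u-u_{0})/\varphi(u)$ is nondecreasing'' is false --- the paper's Example~\ref{exa:u0_not_satisfied}, with $\varphi(u)=e^{(u+1)^{2}/2}$, is a convex counterexample --- though this is not needed for $c(t)$ to be well defined, since the set of valid thresholds is an up-set by definition; measurability of $c$ still requires the rational-enumeration argument the paper uses for $f_{m}$.)
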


Inequalities similar to \eqref{eq:ineq_condition_lemma_2} will be
assumed to hold for $\mu$-a.e.\ $t\in T$. Accordingly, we will
omit this assumption hereafter. The proof of Proposition~\ref{prop:condition_ineq}
requires some preliminary results.

\begin{lem}
\label{lem:seq} Let $\mu$ be a non-atomic, $\sigma$-finite measure.
If $\{\alpha_{m}\}$ is a sequence of positive, real numbers, and
$\{u_{m}\}$ is a sequence of finite-valued, non-negative, measurable
functions, such that
\[
\int_{T}u_{m}d\mu\geq2^{m}\alpha_{m},\quad\text{for all }m\geq1,
\]
then there exist an increasing sequence $\{m_{n}\}$ of natural numbers
and a sequence $\{A_{n}\}$ of pairwise disjoint, measurable sets
such that
\[
\int_{A_{n}}u_{m_{n}}d\mu=\alpha_{m_{n}},\quad\text{for all }n\geq1.
\]
\end{lem}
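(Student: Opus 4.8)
The plan is to build the sets $A_n$ and the indices $m_n$ one at a time, in a greedy fashion, using the hypothesis $\int_T u_m\,d\mu \ge 2^m\alpha_m$ to guarantee that at each stage there is ``enough mass left over'' to carve out a piece of measure-integral exactly $\alpha_{m_n}$. The crucial tool is the intermediate-value property for non-atomic measures: if $\nu$ is a finite measure absolutely continuous w.r.t.\ the non-atomic $\mu$ (here $\nu(E)=\int_E u_m\,d\mu$ for fixed $m$), then $\nu$ is itself non-atomic, and so for any target value $0\le s\le \nu(S)$ and any measurable $S$ there is a measurable subset $E\subseteq S$ with $\nu(E)=s$. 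This is the standard Sierpi\'nski/Lyapunov-type consequence of non-atomicity, and I would cite it (or sketch it via a maximal-chain / Zorn argument, or via a measurable exhaustion) as a preliminary fact.

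The induction runs as follows. Set $m_0=0$ and $A_0=\varnothing$, and suppose $m_1<\dots<m_{n-1}$ and pairwise disjoint $A_1,\dots,A_{n-1}$ have been chosen with $\int_{A_k}u_{m_k}\,d\mu=\alpha_{m_k}$. Write $S_{n-1}=\bigcup_{k=1}^{n-1}A_k$. I want to pick $m_n>m_{n-1}$ so that $\int_{T\setminus S_{n-1}}u_{m_n}\,d\mu\ge \alpha_{m_n}$; then non-atomicity (applied to the measure $E\mapsto\int_E u_{m_n}\,d\mu$ restricted to $T\setminus S_{n-1}$) yields a measurable $A_n\subseteq T\setminus S_{n-1}$ with $\int_{A_n}u_{m_n}\,d\mu=\alpha_{m_n}$, and disjointness is automatic. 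So the whole lemma reduces to showing: for every $n$ there exists $m>m_{n-1}$ with $\int_{S_{n-1}}u_m\,d\mu\le \int_T u_m\,d\mu-\alpha_m$, equivalently $\int_{S_{n-1}}u_m\,d\mu \le (1-2^{-m})\int_T u_m\,d\mu$ would certainly suffice, but more directly it is enough that $\int_{S_{n-1}}u_m\,d\mu \le \int_T u_m\,d\mu - \alpha_m$.

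The main obstacle — and the only real content — is controlling $\int_{S_{n-1}}u_m\,d\mu$ uniformly in $m$. The sets $S_{n-1}$ are \emph{fixed} once chosen, but the functions $u_m$ vary arbitrarily, so a priori $\int_{S_{n-1}}u_m\,d\mu$ could be enormous. The fix is to be more careful when selecting $A_n$: rather than letting $A_n$ be arbitrary, I would additionally require that each $A_k$ be chosen \emph{inside a set of finite $\mu$-measure} and, more importantly, inside a set on which all the relevant mass is already accounted for — concretely, I would interleave the construction with a $\sigma$-finite exhaustion $T=\bigcup_j T_j$, $\mu(T_j)<\infty$, and arrange that $A_k\subseteq T_{j_k}$ for suitable $j_k$, while also shrinking the ``reservoir'' at stage $n$ to $T\setminus(S_{n-1}\cup T_{j_{n-1}})$ or similar, so that the already-used part contributes boundedly. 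Alternatively — and I think this is cleaner — observe that since $\alpha_m>0$ and $\int_T u_m\,d\mu\ge 2^m\alpha_m$, one has for any fixed finite-measure set $S$ that $\int_S u_m\,d\mu$ is dwarfed by $\int_T u_m\,d\mu$ \emph{provided $u_m$ does not concentrate on $S$}; to force non-concentration I would, at stage $k$, choose $A_k$ to also satisfy $\int_{A_k}u_m\,d\mu$ small for the finitely many ``future-looking'' indices, which is impossible since there are infinitely many future indices — so the honest route is the exhaustion argument: pick $A_k\subseteq T_{j_k}\cap\{u_{m_k}\le N_k\}$ with $\mu(A_k)$ controlled, giving $\int_{A_k}u_m\,d\mu\le N_k\,\mu(A_k)$ is the \emph{wrong} bound too. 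I therefore expect the clean argument to be: at step $n$, among $m>m_{n-1}$, pass to a subsequence along which $\int_{S_{n-1}}u_m\,d\mu / \int_T u_m\,d\mu \to \liminf$; if that liminf is $<1$ we are done for infinitely many $m$ (since then $\int_{S_{n-1}}u_m\,d\mu\le(1-\varepsilon)\int_T u_m\,d\mu = \int_T u_m\,d\mu-\varepsilon\int_T u_m\,d\mu\le \int_T u_m\,d\mu - \varepsilon 2^m\alpha_m\le \int_T u_m\,d\mu-\alpha_m$ for large $m$), and if that liminf is $1$ then $\int_{T\setminus S_{n-1}}u_m\,d\mu = o(\int_T u_m\,d\mu)$, which one shows contradicts $\mu(S_{n-1})<\infty$ together with $\sigma$-finiteness only after also arranging an upper bound on $u_m$ over $S_{n-1}$ — so the real work is in the bookkeeping that makes ``$S_{n-1}$ cannot swallow almost all the mass'' rigorous. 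I would present the exhaustion-plus-truncation bookkeeping carefully, as that is where the $\sigma$-finiteness and non-atomicity hypotheses are genuinely used, and everything else is the routine greedy induction sketched above.
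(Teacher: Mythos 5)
Your preliminary reduction is fine as far as it goes: the measure $E\mapsto\int_{E}u_{m}\,d\mu$ is $\sigma$-finite and non-atomic (since $u_{m}$ is finite-valued and $\mu$ is $\sigma$-finite non-atomic), so the intermediate-value property you invoke is legitimate, and you have correctly located the real difficulty. But you do not resolve it, and the fallback you sketch cannot be repaired. In the greedy scheme, $A_{n}$ is chosen using only $u_{m_{n}}$, so nothing prevents all later functions from carrying their entire mass on $S_{n}=A_{1}\cup\dots\cup A_{n}$. Concretely, take $T=[0,1]$ with Lebesgue measure, $u_{1}\equiv2\alpha_{1}$, and $u_{m}=2^{m+1}\alpha_{m}\chi_{[0,1/2]}$ for $m\geq2$; all hypotheses hold, but if the first step happens to produce $A_{1}=[0,1/2]$, then $\int_{T\setminus A_{1}}u_{m}\,d\mu=0<\alpha_{m}$ for every $m\geq2$ and the induction halts. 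In particular your ``$\liminf=1$'' case is not contradictory --- it genuinely occurs --- so no amount of exhaustion-and-truncation bookkeeping will rescue the argument as stated. The choice of $A_{n}$ itself must look ahead at the infinitely many remaining functions, which is exactly the point at which your proposal stops.

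The missing idea is a reservoir-plus-pigeonhole construction in which the factor $2^{m}$ is spent on repeated halvings; this is essentially the proof of Musielak's Lemma~8.3, which the paper cites rather than reproving. Maintain a reservoir $R_{n-1}$ and an infinite index set $M_{n-1}\subseteq\{m>m_{n-1}\}$ with $\int_{R_{n-1}}u_{m}\,d\mu\geq2^{m-(n-1)}\alpha_{m}$ for all $m\in M_{n-1}$ (start with $R_{0}=T$ and $M_{0}=\mathbb{N}$). Put $m_{n}=\min M_{n-1}$; since $m_{n}\geq n$, we get $\int_{R_{n-1}}u_{m_{n}}\,d\mu\geq2\alpha_{m_{n}}$, so by non-atomicity $R_{n-1}$ splits as $B\sqcup C$ with $\int_{B}u_{m_{n}}\,d\mu\geq\alpha_{m_{n}}$ and $\int_{C}u_{m_{n}}\,d\mu\geq\alpha_{m_{n}}$. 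For each $m\in M_{n-1}$ with $m>m_{n}$, at least one of $B,C$ receives at least half of $\int_{R_{n-1}}u_{m}\,d\mu$, hence at least $2^{m-n}\alpha_{m}$; by pigeonhole one of the two halves does so for infinitely many such $m$. Declare that half to be the new reservoir $R_{n}$, let $M_{n}$ be those indices, and extract $A_{n}$ from the \emph{other} half with $\int_{A_{n}}u_{m_{n}}\,d\mu=\alpha_{m_{n}}$ by the intermediate-value property. Disjointness is automatic because $A_{k}\subseteq R_{k-1}\setminus R_{k}$. Without this look-ahead mechanism (or an equivalent one), the proof is incomplete.
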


A proof of Lemma~\ref{lem:seq} is found in \cite[Lemma 8.3]{Musielak:1983}.
We use Lemma~\ref{lem:seq} to prove the result stated below.

\begin{lem}
\label{lem:technical} Suppose that we cannot find $\alpha\in(0,1)$
and a measurable function $c\colon T\rightarrow\mathbb{R}\cup\{-\infty\}$
such that $\int_{T}\varphi(c)d\mu<\infty$ and
\begin{equation}
\alpha\varphi(u)\leq\varphi(u-u_{0}(t)),\qquad\text{for all }u\geq c(t).\label{eq:ineq_condition_lemma}
\end{equation}
Then there exist sequences $\{c_{n}\}$ and $\{A_{n}\}$ of measurable
functions, and pairwise disjoint, measurable sets, respectively, such
that
\begin{equation}
\int_{A_{n}}\varphi(c_{n})d\mu=1\quad\text{and}\quad\int_{A_{n}}\varphi(c_{n}-u_{0})d\mu\leq2^{-n},\quad\text{for all }n\geq1.\label{eq:c_n_lambda_n}
\end{equation}
\end{lem}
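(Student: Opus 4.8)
The plan is to prove the contrapositive directly, extracting the sequences $\{c_n\}$ and $\{A_n\}$ from the failure of inequality \eqref{eq:ineq_condition_lemma}. The hypothesis says that for \emph{every} $\alpha \in (0,1)$ and every candidate $c$ with $\int_T \varphi(c)\,d\mu < \infty$, the inequality $\alpha\varphi(u) \le \varphi(u - u_0(t))$ fails on a non-null set of $t$ (or fails for some $u \ge c(t)$ at those $t$). The strategy will be to apply this failure with a well-chosen sequence $\alpha_m \downarrow 0$ — for instance $\alpha_m = 2^{-m}$ or similar — and with $c$ taken to be a sequence of truncations, in order to manufacture, for each $m$, a measurable function $c_m$ such that $\varphi(c_m) - 2^m\varphi(c_m - u_0) \ge 0$ holds on a large set in the sense that $\int_T [\varphi(c_m) - 2^m\varphi(c_m - u_0)]\,d\mu$ is bounded below by a fixed positive constant, say $1$. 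Concretely, I would argue that if no such $c_m$ existed for a given $m$, one could patch together a single $c$ (constant $-\infty$ off a suitable exhausting set, truncated otherwise) witnessing inequality \eqref{eq:ineq_condition_lemma} with $\alpha = 2^{-m}$, contradicting the hypothesis.

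Once these $c_m$ are in hand, the second step is to invoke Lemma~\ref{lem:seq}. Set $u_m := \max(\varphi(c_m) - 2^{m}\varphi(c_m - u_0),\, 0)$, a finite-valued, non-negative, measurable function with $\int_T u_m\,d\mu \ge 1 \ge 2^{m}\alpha_m$ when $\alpha_m := 2^{-m}$. Lemma~\ref{lem:seq} then yields an increasing sequence $\{m_n\}$ and pairwise disjoint measurable sets $\{A_n\}$ with $\int_{A_n} u_{m_n}\,d\mu = \alpha_{m_n} = 2^{-m_n}$. On the set $A_n$ we have $u_{m_n} = \varphi(c_{m_n}) - 2^{m_n}\varphi(c_{m_n} - u_0)$ wherever the bracket is nonnegative; I would arrange the truncation so that this holds a.e.\ on $A_n$ (e.g.\ by first restricting to the set where the bracket is positive before applying the lemma, or by noting the bracket is $\ge 0$ precisely where $u_{m_n}>0$ and handling the null complement). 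Then $\int_{A_n}\varphi(c_{m_n})\,d\mu = 2^{-m_n} + 2^{m_n}\int_{A_n}\varphi(c_{m_n}-u_0)\,d\mu$, which is finite, and rescaling $c_{m_n}$ by an additive constant (replacing it with $c_{m_n} + \varphi^{-1}$-type shift, or simply relabeling) lets me normalize $\int_{A_n}\varphi(\tilde c_n)\,d\mu = 1$; simultaneously the inequality $\int_{A_n}\varphi(\tilde c_n - u_0)\,d\mu \le 2^{-n}$ is forced by the factor $2^{m_n} \ge 2^{n}$ together with the normalization. Defining $c_n := \tilde c_n$ completes the construction of \eqref{eq:c_n_lambda_n}.

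The main obstacle I anticipate is the bookkeeping in the first step: turning the \emph{pointwise, $u$-dependent} failure of inequality \eqref{eq:ineq_condition_lemma} — which a priori only says that for each admissible $c$ there is \emph{some} bad pair $(t,u)$ — into a \emph{single measurable function} $c_m$ that achieves a uniform lower bound $\int_T u_m\,d\mu \ge 1$ on an honest integral. This requires a covering/exhaustion argument: enumerate a countable family of test thresholds, use $\sigma$-finiteness to exhaust $T$ by finite-measure sets on which $\varphi(c)$ and $u_0$ are bounded, and show that if every such $c_m$ fell short of the bound $1$, the suprema could be glued (since the relevant sets can be taken nested and the excess mass is controlled) into one function violating the hypothesis for $\alpha = 2^{-m}$. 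Care is needed because $c$ is allowed to take the value $-\infty$ (so $\varphi(c) = 0$ there), which is exactly the device that lets us localize; I expect the cleanest route is to define $c_m$ on a carefully chosen finite-measure piece and set it to $-\infty$ elsewhere, then verify measurability and the integral bound. The remaining steps — applying Lemma~\ref{lem:seq}, restricting to where the bracket is positive, and the additive normalization of $c_n$ — are routine once this is set up.
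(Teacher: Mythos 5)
Your overall skeleton (a supremum function built from the failure of \eqref{eq:ineq_condition_lemma} for $\alpha=2^{-m}$, measurability via a rational enumeration, then Lemma~\ref{lem:seq}) matches the paper's strategy, but the way you feed Lemma~\ref{lem:seq} creates a normalization problem that your final step does not resolve, and this is a genuine gap. You apply Lemma~\ref{lem:seq} to the bracket $u_m=\max(\varphi(c_m)-2^m\varphi(c_m-u_0),0)$ with $\alpha_m=2^{-m}$, obtaining sets $A_n$ on which the \emph{bracket} integrates to exactly $2^{-m_n}$. Writing $a=\int_{A_n}\varphi(c_{m_n})\,d\mu$ and $b=\int_{A_n}\varphi(c_{m_n}-u_0)\,d\mu$, this only gives $a-2^{m_n}b=2^{-m_n}$, so $a$ can be anything from $2^{-m_n}$ upward; you still need $a=1$. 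Your proposed fix --- shifting $c_{m_n}$ by an additive constant $s$ until $\int_{A_n}\varphi(c_{m_n}+s)\,d\mu=1$ --- destroys the only leverage you have: the pointwise inequality $\varphi(c_{m_n}(t)-u_0(t))\le 2^{-m_n}\varphi(c_{m_n}(t))$ holds at the specific value $c_{m_n}(t)$ (it is essentially a supremum of such values) and there is no reason it survives at $c_{m_n}(t)+s$ for a general convex $\varphi$, since the ratio $\varphi(v-u_0)/\varphi(v)$ need not be monotone in $v$. Without that pointwise bound after the shift, the claim $\int_{A_n}\varphi(\tilde c_n-u_0)\,d\mu\le 2^{-n}$ has no support; it is not ``forced by the factor $2^{m_n}\ge 2^n$.''

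The repair is to normalize the right quantity from the start, which is what the paper does. The failure of \eqref{eq:ineq_condition_lemma} for $\alpha=2^{-m}$ with the candidate $c=f_m$, where $f_m(t)=\sup\{u:2^{-m}\varphi(u)>\varphi(u-u_0(t))\}$, forces $\int_T\varphi(f_m)\,d\mu=\infty$ (not merely $\ge 1$, which is all your first step extracts). By monotone convergence over the rational truncations $v_{m,k}$ one then finds $v_m$ and $B_m$ with $\int_{B_m}\varphi(v_m)\,d\mu\ge 2^m$ and, crucially, the pointwise inequality $2^{-m}\varphi(v_m)\ge\varphi(v_m-u_0)$ on $B_m$ (because $v_m$ takes values at which the defining strict inequality holds). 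Applying Lemma~\ref{lem:seq} to $u_m=\varphi(v_m)\chi_{B_m}$ with $\alpha_m=1$ yields $\int_{A_n}\varphi(v_{m_n})\,d\mu=1$ exactly, and the second condition $\int_{A_n}\varphi(v_{m_n}-u_0)\,d\mu\le 2^{-m_n}\le 2^{-n}$ then follows immediately from the pointwise inequality, with no rescaling needed. I recommend you restructure your argument along these lines: strengthen the first step to an infinite integral, and apply Lemma~\ref{lem:seq} to $\varphi(v_m)\chi_{B_m}$ rather than to the bracket.
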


\begin{proof}
For each $m\geq1$, we define the function
\[
f_{m}(t)=\sup\{u\in\mathbb{R}:2^{-m}\varphi(u)>\varphi(u-u_{0}(t))\},
\]
where we use the convention $\sup\emptyset=-\infty$. We will verify
that $f_{m}$ is measurable. For each rational number $r$, define
the measurable sets
\[
E_{m,r}=\{t\in T:2^{-m}\varphi(r)>\varphi(r-u_{0}(t))\}
\]
and the simple functions $u_{m,r}=r\chi_{E_{m,r}}$. Let $\{r_{i}\}$
be an enumeration of the rational numbers. For each $m,k\geq1$, consider
the non-negative, simple functions $v_{m,k}=\max_{1\leq i\leq k}u_{m,r_{i}}$.
Moreover, denote $B_{m,k}=\bigcup_{i=1}^{k}E_{m,r_{i}}$. By the continuity
of $\varphi(\cdot)$, it follows that $\varphi(v_{m,k})\chi_{B_{m,k}}\uparrow\varphi(f_{m})$
as $k\rightarrow\infty$, which shows that $f_{m}$ is measurable.
Since \eqref{eq:ineq_condition_lemma} is not satisfied, we have that
$\int_{T}\varphi(f_{m})d\mu=\infty$ for all $m\geq1$. In virtue
of the Monotone Convergence Theorem, for each $m\geq1$, we can find
some $k_{m}\geq1$ such that the function $v_{m}=v_{m,k_{m}}$ and
the set $B_{m}=B_{m,k_{m}}$ satisfy $\int_{B_{m}}\varphi(v_{m})d\mu\geq2^{m}$.
Clearly, we have that $\varphi(v_{m})\chi_{B_{m}}<\infty$ and $2^{-m}\varphi(v_{m})\chi_{B_{m}}\geq\varphi(v_{m}-u_{0})\chi_{B_{m}}$.
By Lemma~\ref{lem:seq}, there exist an increasing sequence $\{m_{n}\}$
of indices and a sequence $\{A_{n}\}$ of pairwise disjoint, measurable
sets such that $\int_{A_{n}}\varphi(v_{m_{n}})d\mu=1$. Clearly, $\int_{A_{n}}\varphi(v_{m_{n}}-u_{0})d\mu\leq2^{-m_{n}}$.
Denoting $c_{n}=v_{m_{n}}$, we obtain \eqref{eq:c_n_lambda_n}.
\end{proof}

\begin{proof}[Proof of Proposition~\ref{prop:condition_ineq}]
 Assume that $\varphi(\cdot)$ and $u_{0}$ satisfy condition \eqref{eq:u0_condition}.
Suppose that expression \eqref{eq:ineq_condition_lemma_2} does not
hold. Let $\{c_{n}\}$ and $\{A_{n}\}$ be as stated in Lemma~\ref{lem:technical}.
Denote $A=\bigcup_{n=1}^{\infty}A_{n}$. Then we define $c=c_{0}\chi_{T\setminus A}+\sum_{n=1}^{\infty}c_{n}\chi_{A_{n}}$,
where $c_{0}\colon T\rightarrow\mathbb{R}$ is any measurable function
such that $\int_{T\setminus A}\varphi(c_{0})d\mu<\infty$. Using \eqref{eq:c_n_lambda_n},
we can write
\begin{align}
\int_{T}\varphi(c)d\mu & =\int_{T\setminus A}\varphi(c_{0})d\mu+\sum_{n=1}^{\infty}\int_{A_{n}}\varphi(c_{n})d\mu\nonumber \\
 & =\int_{T\setminus A}\varphi(c_{0})d\mu+\sum_{n=1}^{\infty}1=\infty.\label{eq:int_c_infty}
\end{align}
In addition, it follows that
\begin{align*}
\int_{T}\varphi(c-u_{0})d\mu & =\int_{T\setminus A}\varphi(c_{0}-u_{0})d\mu+\sum_{n=1}^{\infty}\int_{A_{n}}\varphi(c_{n}-u_{0})d\mu\\
 & \leq\int_{T\setminus A}\varphi(c_{0})d\mu+\sum_{n=1}^{\infty}2^{-n}<\infty.
\end{align*}
By condition \eqref{eq:u0_condition}, we get $\int_{T}\varphi(c)d\mu=\int_{T}\varphi(c-u_{0}+u_{0})d\mu<\infty$,
which is a contradiction to \eqref{eq:int_c_infty}.

Conversely, suppose that expression \eqref{eq:ineq_condition_lemma_2}
holds. Let $\widetilde{c}\colon T\rightarrow\mathbb{R}$ be any measurable
function satisfying $\int_{T}\varphi(\widetilde{c})d\mu<\infty$.
Denote $A=\{t:\widetilde{c}(t)+u_{0}(t)\geq c(t)\}$. We use inequality
\eqref{eq:ineq_condition_lemma_2} to write
\begin{align*}
\alpha\int_{T}\varphi(\widetilde{c}+u_{0})d\mu & \leq\alpha\int_{A}\varphi(\widetilde{c}+u_{0})d\mu+\alpha\int_{T\setminus A}\varphi(c)d\mu\\
 & \leq\int_{A}\varphi(\widetilde{c})d\mu+\alpha\int_{T\setminus A}\varphi(c)d\mu<\infty.
\end{align*}
As a result, we can conclude that $\int_{T}\varphi(\widetilde{c}+nu_{0})d\mu<\infty$
for all $n\geq1$. Consequently, $\int_{T}\varphi(\widetilde{c}+\lambda u_{0})d\mu<\infty$
for all $\lambda>0$.
\end{proof}

In Proposition~\ref{prop:condition_ineq}, if we consider $u_{0}=1$
then the function $c(t)$ can be chosen to be constant. Clearly inequality
\eqref{eq:ineq_condition_lemma_2} with $u_{0}=1$ holds for all $u\geq\operatorname{ess\,inf}c(t)$.
As a result, we can replace $c(t)$ with $\operatorname{ess\,inf}c(t)$
if the measure $\mu$ is finite. On the other hand, assume $\mu(T)=\infty$.
Then $\int_{T}\varphi(c)d\mu<\infty$ implies $\operatorname{ess\,inf}c(t)=a_{\varphi}:=\inf\{u\in\mathbb{R}:\varphi(u)>0\}$.
It cannot be the case $a_{\varphi}>-\infty$, since we would have
$0<\alpha\varphi(u)\leq\varphi(u-1)=0$ for $a_{\varphi}<u\leq a_{\varphi}+1$.
Consequently, the function $c(t)$ can be replaced by $\operatorname{ess\,inf}c(t)=-\infty$;
and inequality \eqref{eq:ineq_condition_lemma_2} holds for all $u\in\mathbb{R}$.

Next we present a class of deformed exponential functions which admit
$u_{0}=1$.

\begin{example}
We will show that the Kaniadakis' $\kappa$-exponential $\exp_{\kappa}(\cdot)$
and $u_{0}=1$ satisfy condition \eqref{eq:u0_condition}. The \textit{$\kappa$-exponential}\emph{
}$\exp_{\kappa}\colon\mathbb{R}\rightarrow(0,\infty)$ for $\kappa\in[-1,1]$
is defined as
\[
\exp_{\kappa}(u)=\begin{cases}
(\kappa u+\sqrt{1+\kappa^{2}u^{2}})^{1/\kappa}, & \text{if }\kappa\neq0,\\
\exp(u), & \text{if }\kappa=0,
\end{cases}
\]
Its inverse, the so called\textit{ $\kappa$-logarithm} $\log_{k}\colon(0,\infty)\rightarrow\mathbb{R}$,
is given by
\[
\log_{\kappa}(v)=\begin{cases}
\dfrac{v^{\kappa}-v^{-\kappa}}{2\kappa}, & \text{if }\kappa\neq0,\\
\ln(v), & \text{if }\kappa=0.
\end{cases}
\]
We will verify that there exist $\alpha\in(0,1)$ and $\lambda>0$
for which
\begin{equation}
\lambda\leq\log_{\kappa}(v)-\log_{\kappa}(\alpha v),\qquad\text{for all }v>0.\label{eq:ineq_log_kappa}
\end{equation}
Some manipulations imply that the derivative of $\log_{\kappa}(v)-\log_{\kappa}(\alpha v)$
is negative for $0<v\leq v_{0}$ and positive for $v\geq v_{0}$,
where
\[
v_{0}=\Bigl(\frac{\alpha^{-\kappa}-1}{1-\alpha^{\kappa}}\Bigr)^{\frac{1}{2\kappa}} = \left(\frac{1}{\alpha}\right)^{\frac{1}{2}} > 0.
\]
Consequently, the difference $\log_{\kappa}(v)-\log_{\kappa}(\alpha v)$
attains a minimum at $v_{0}$; given $\alpha\in(0,1)$, inequality
\eqref{eq:ineq_log_kappa} is satisfied for some $\lambda>0$. Inserting
$v=\exp_{\kappa}(u)$ into \eqref{eq:ineq_log_kappa}, we can write
\begin{equation}
\alpha\exp_{\kappa}(u)\leq\exp_{\kappa}(u-\lambda),\qquad\text{for all }u\in\mathbb{R}.\label{eq:ineq_exp_kappa_lambda}
\end{equation}
If $n\in\mathbb{N}$ is such that $n\lambda\geq1$, then a repeated
application of \eqref{eq:ineq_exp_kappa_lambda} yields
\[
\alpha^{n}\exp_{\kappa}(u)\leq\exp_{\kappa}(u-n\lambda)\leq\exp_{\kappa}(u-1),\qquad\text{for all }u\in\mathbb{R}.
\]
Proposition~\ref{prop:condition_ineq} implies that $u_{0}=1$ satisfies
condition \eqref{eq:u0_condition}.
\end{example}

Now we show an equivalent criterion for the existence of $u_{0}$
satisfying \eqref{eq:u0_condition}.

\begin{prop}
Let $\varphi\colon\mathbb{R}\rightarrow[0,\infty)$ be a deformed
exponential. Then we can find a measurable function $u_{0}\colon\mathbb{R}\rightarrow(0,\infty)$
for which condition \eqref{eq:u0_condition} holds if, and only if,
\begin{equation}
\limsup_{u\rightarrow\infty}\frac{\varphi(u)}{\varphi(u-\lambda_{0})}<\infty,\label{eq:phi_cond_equiv}
\end{equation}
for some $\lambda_{0}>0$.
\end{prop}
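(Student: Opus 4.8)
My plan is to run both implications through Proposition~\ref{prop:condition_ineq}, which is the natural bridge: it says $\varphi$ and $u_0$ satisfy \eqref{eq:u0_condition} exactly when there exist $\alpha\in(0,1)$ and a measurable $c$ with $\int_T\varphi(c)\,d\mu<\infty$ such that $\alpha\varphi(u)\le\varphi(u-u_0(t))$ for all $u\ge c(t)$, for $\mu$-a.e.\ $t$. I would first record the elementary fact that a deformed exponential is non-decreasing on $\mathbb{R}$ (if $\varphi(u_1)>\varphi(u_2)$ for some $u_1<u_2$, convexity forces $\varphi(u)>\varphi(u_1)>0$ for every $u<u_1$, against $\lim_{u\to-\infty}\varphi(u)=0$), so that $\varphi>0$ precisely on $(a_\varphi,\infty)$ with $a_\varphi<\infty$; this is what makes comparisons between shifted arguments clean.

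The forward direction is then almost immediate. If some $u_0$ satisfies \eqref{eq:u0_condition}, Proposition~\ref{prop:condition_ineq} gives $\alpha$ and $c$ as above with the pointwise inequality holding off a $\mu$-null set; since $\mu$ is not the zero measure I may fix a single point $t_*$ outside that set and set $\lambda_0:=u_0(t_*)>0$, obtaining $\alpha\varphi(u)\le\varphi(u-\lambda_0)$ for all $u\ge c(t_*)$. For $u$ large enough that $u\ge c(t_*)$ and $u-\lambda_0>a_\varphi$ (hence $\varphi(u-\lambda_0)>0$), this is exactly $\varphi(u)/\varphi(u-\lambda_0)\le 1/\alpha$, so \eqref{eq:phi_cond_equiv} holds with this $\lambda_0$.

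For the converse, assume \eqref{eq:phi_cond_equiv} for some $\lambda_0>0$; fix $M>1$ and $u_1>a_\varphi+\lambda_0$ with $\varphi(u)\le M\varphi(u-\lambda_0)$ for $u\ge u_1$, and put $\alpha=1/M$. The point to appreciate is that a constant $u_0$ need not work when $\mu(T)=\infty$ and $a_\varphi>-\infty$, since then $\int_T\varphi(c)\,d\mu<\infty$ drags $c$ down to $a_\varphi$ while \eqref{eq:ineq_condition_lemma_2} forces $c(t)\ge a_\varphi+u_0$. So I would split $T=\bigsqcup_{n\ge1}S_n$ with $0<\mu(S_n)<\infty$ and, on $S_n$, take $c\equiv c_n$ and $u_0\equiv\delta_n$: first choose $c_n\in(a_\varphi,u_1)$ close enough to $a_\varphi$ that $\varphi(c_n)\mu(S_n)\le 2^{-n}$ (possible since $\varphi(c_n)\to\varphi(a_\varphi)=0$), and then choose $\delta_n\in(0,\lambda_0]$ small enough that $\varphi(u)\le M\varphi(u-\delta_n)$ for all $u\in[c_n,u_1]$ — this is available because on a suitable compact interval $\varphi$ is uniformly continuous and bounded below by a positive constant, so $\sup_{u\in[c_n,u_1]}\varphi(u)/\varphi(u-\delta)\to 1$ as $\delta\downarrow 0$. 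Then $\int_T\varphi(c)\,d\mu\le\sum_n 2^{-n}<\infty$, and for $t\in S_n$ the inequality $\alpha\varphi(u)\le\varphi(u-\delta_n)$ holds for $c_n\le u\le u_1$ by construction and for $u>u_1$ because $\varphi(u-\delta_n)\ge\varphi(u-\lambda_0)\ge\varphi(u)/M$ (monotonicity together with $\delta_n\le\lambda_0$ and the choice of $u_1$); Proposition~\ref{prop:condition_ineq} then yields \eqref{eq:u0_condition}.

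The only real difficulty lies in this last step: one must let $u_0$ vary, shrinking it toward $0$ on exactly the pieces where $c_n$ is forced near $a_\varphi$, and handle the ``middle range'' $[c_n,u_1]$ by a uniform-continuity argument, since \eqref{eq:phi_cond_equiv} controls only the tail $u\ge u_1$. The forward direction and the reduction to Proposition~\ref{prop:condition_ineq} are routine by comparison.
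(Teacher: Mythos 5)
Your proof is correct and follows essentially the same route as the paper: both directions are funneled through Proposition~\ref{prop:condition_ineq}, and the converse is handled by constructing a piecewise-constant $u_{0}$ whose values shrink to zero over a countable partition of $T$ into sets of finite measure. The only difference is the order of the two choices in that construction: the paper fixes $\lambda_{n}\downarrow0$ first and takes $c_{n}=\sup\{u:\alpha\varphi(u)>\varphi(u-\lambda_{n})\}$, proving $\varphi(c_{n})\downarrow0$ and then passing to a subsequence so that $\sum_{k}\varphi(c_{n_{k}})\mu(T_{k})<\infty$, whereas you fix $c_{n}$ first to make that sum converge and then shrink $\delta_{n}$ via uniform continuity on $[c_{n},u_{1}]$ --- both are valid.
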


\begin{proof}
By Proposition~\ref{prop:condition_ineq} we can conclude that the
existence of $u_{0}$ implies \eqref{eq:phi_cond_equiv}. Conversely,
assume that expression \eqref{eq:phi_cond_equiv} holds for some $\lambda_{0}>0$.
In this case, there exists $M\in(1,\infty)$ and $\overline{c}\in\mathbb{R}$
such that $\frac{\varphi(u)}{\varphi(u-\lambda_{0})}\leq M$ for all
$u\geq\overline{c}$. Let $\{\lambda_{n}\}$ be any sequence in $(0,\lambda_{0}]$
such that $\lambda_{n}\downarrow0$. For each $n\geq1$, define
\begin{equation}
c_{n}=\sup\{u\in\mathbb{R}:\alpha\varphi(u)>\varphi(u-\lambda_{n})\},\label{eq:cn_def}
\end{equation}
where $\alpha=1/M$ and we adopt the convention $\sup\emptyset=-\infty$.
From the choice of $\{\lambda_{n}\}$ and $\alpha$, it follows that
$-\infty\leq c_{n}\leq\overline{c}$. We claim that $\varphi(c_{n})\downarrow0$.
If the sequence $\{c_{n}\}$ converges to some $c>-\infty$, the equality
$\alpha\varphi(c_{n})=\varphi(c_{n}-\lambda_{n})$ implies $\alpha\varphi(c)=\varphi(c)$
and then $\varphi(c)=0$. In the case $c_{n}\downarrow-\infty$, it
is clear that $\varphi(c_{n})\downarrow0$. Let $\{T_{k}\}$ be a
sequence of pairwise disjoint, measurable sets with $\mu(T_{k})<\infty$
and $\mu(T\setminus\bigcup_{k=1}^{\infty}T_{k})=0$. Thus we can select
a sub-sequence $\{c_{n_{k}}\}$ such that $\sum_{k=1}^{\infty}\varphi(c_{n_{k}})\mu(T_{k})<\infty$.
Let us define $c=\sum_{k=1}^{\infty}c_{n_{k}}\chi_{T_{k}}$ and $u_{0}=\sum_{k=1}^{\infty}\lambda_{n_{k}}\chi_{T_{k}}$.
From \eqref{eq:cn_def} it follows that
\[
\alpha\varphi(u)\leq\varphi(u-u_{0}(t)),\qquad\text{for all }u\geq c(t).
\]
Proposition~\ref{prop:condition_ineq} implies that $\varphi(\cdot)$
and $u_{0}$ satisfy condition \eqref{eq:u0_condition}.
\end{proof}

For the deformed exponential function $\varphi(\cdot)$ given in Example~\ref{exa:u0_not_satisfied},
it follows that
\begin{align*}
\limsup_{u\rightarrow\infty}\frac{\varphi(u)}{\varphi(u-\lambda_{0})} & =\limsup_{u\rightarrow\infty}\frac{e^{(u+1)^{2}/2}}{e^{(u-\lambda_{0}+1)^{2}/2}}\\
 & =\limsup_{u\rightarrow\infty}e^{u\lambda_{0}-(\lambda_{0}^{2}-\lambda_{0})/2}=\infty,
\end{align*}
which shows that $\varphi(\cdot)$ cannot be used in the generalization
of R\'{e}nyi divergence.

A deformed exponential function $\varphi(\cdot)$ that satisfies \eqref{eq:phi_cond_equiv}
does not increase faster then $u\mapsto e^{\lambda u}$ for some $\lambda\geq1$.
Expression \eqref{eq:phi_cond_equiv} is equivalent to the existence
of constants $K\geq1$ and $c\in\mathbb{R}\cup\{-\infty\}$ such that
\[
\frac{\varphi(u)}{\varphi(u-\lambda_{0})}\leq K,\qquad\text{for all }u\geq c.
\]
Fixed any $v\geq0$ we take an integer $n\geq0$ such that $n\lambda_{0}\leq v<(n+1)\lambda_{0}$.
For $u\geq c$, we can write
\begin{align*}
\varphi(u+v) & \leq\varphi(u+(n+1)\lambda_{0})\\
 & \leq K^{n+1}\varphi(u)\\
 & \leq K^{v/\lambda_{0}+1}\varphi(u)\\
 & =K\varphi(u)e^{\lambda v},
\end{align*}
where $\lambda=\log(K)/\lambda_{0}$. Therefore, the function $\varphi(\cdot)$
cannot increase faster then $u\mapsto e^{\lambda u}$ .

\subsection{Purely atomic case}

In this subsection, we will assume that $\mu$ is the counting measure
on the set of natural numbers $T=\mathbb{N}$. Due to this assumption,
notation changes a little. Sequences and summations are considered
in the place of functions and integrals. Condition \eqref{eq:u0_condition}
is rewritten as follows. We assume that there exists a sequence $\{u_{0,i}\}\subset(0,\infty)$
such that
\begin{equation}
\sum_{i=1}^{\infty}\varphi(c_{i}+\lambda u_{0,i})<\infty,\qquad\text{for all }\lambda>0,\label{eq:u0_condition_seq}
\end{equation}
for each sequence $\{c_{i}\}\subset\mathbb{R}$ such that $\sum_{i=1}^{\infty}\varphi(c_{i})<\infty$.
Beyond these changes, proofs of results involving condition \eqref{eq:u0_condition_seq}
require distinct techniques. In this subsection, we shall find an
equivalent criterion for a deformed exponential function and a sequence
$\{u_{0,i}\}$ to satisfy condition \eqref{eq:u0_condition_seq}.
We will prove that, in the case $\mu$ is the counting measure, any
deformed exponential function $\varphi(\cdot)$ admits a sequence
$\{u_{0,i}\}$ for which condition \eqref{eq:u0_condition_seq} holds.

\begin{prop}
\label{prop:condition_ineq_seq} A deformed exponential $\varphi\colon\mathbb{R}\rightarrow[0,\infty)$
and a sequence $\{u_{0,i}\}$ satisfy condition \eqref{eq:u0_condition_seq}
if, and only if, for some constants $\alpha\in(0,1)$ and $\varepsilon>0$,
we can find a sequence $\{c_{i}\}\subseteq\mathbb{\mathbb{R}}\cup\{-\infty\}$
such that $\sum_{i=1}^{\infty}\varphi(c_{i})<\infty$ and
\begin{equation}
\alpha\varphi(u)\leq\varphi(u-u_{0,i}),\qquad\text{for all }u>c_{i}\mbox{ with }\varphi(u-u_{0,i})<\varepsilon.\label{eq:ineq_condition_lemma_2_seq}
\end{equation}
\end{prop}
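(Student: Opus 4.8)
The plan is to mimic the structure of the proof of Proposition~\ref{prop:condition_ineq} in the non-atomic case, but to replace the measure-theoretic splitting argument (Lemma~\ref{lem:seq}) by an elementary rearrangement of a series, and to keep careful track of the extra parameter $\varepsilon$, which is forced on us because the counting measure has no non-atomic part to subdivide. Accordingly I would split the argument into the two implications.

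For the ``if'' direction, suppose we have $\alpha\in(0,1)$, $\varepsilon>0$ and a sequence $\{c_i\}$ with $\sum_i\varphi(c_i)<\infty$ satisfying \eqref{eq:ineq_condition_lemma_2_seq}. Let $\{\widetilde c_i\}$ be any sequence with $\sum_i\varphi(\widetilde c_i)<\infty$; I want to bound $\sum_i\varphi(\widetilde c_i+u_{0,i})$. Split $\mathbb{N}$ into three index sets: $I_1=\{i:\widetilde c_i+u_{0,i}\le c_i\}$, $I_2=\{i:\widetilde c_i+u_{0,i}>c_i\text{ and }\varphi(\widetilde c_i)<\varepsilon\}$ (applying \eqref{eq:ineq_condition_lemma_2_seq} at $u=\widetilde c_i+u_{0,i}$, using $\varphi(u-u_{0,i})=\varphi(\widetilde c_i)<\varepsilon$), and $I_3=\{i:\varphi(\widetilde c_i)\ge\varepsilon\}$. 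On $I_1$, monotonicity of $\varphi$ on $[a_\varphi,\infty)$ gives $\varphi(\widetilde c_i+u_{0,i})\le\varphi(c_i)$, summable. On $I_2$, \eqref{eq:ineq_condition_lemma_2_seq} gives $\alpha\varphi(\widetilde c_i+u_{0,i})\le\varphi(\widetilde c_i)$, so that part is $\le\alpha^{-1}\sum_i\varphi(\widetilde c_i)<\infty$. The set $I_3$ is \emph{finite}, since $\sum_i\varphi(\widetilde c_i)<\infty$ forces only finitely many terms to exceed $\varepsilon$; a finite sum is automatically finite. This gives $\sum_i\varphi(\widetilde c_i+u_{0,i})<\infty$, and then iterating (replace $\widetilde c_i$ by $\widetilde c_i+u_{0,i}$, which is again summable) yields $\sum_i\varphi(\widetilde c_i+nu_{0,i})<\infty$ for every integer $n\ge1$, hence \eqref{eq:u0_condition_seq} for all $\lambda>0$.

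For the ``only if'' direction I would argue by contraposition, in parallel with Lemma~\ref{lem:technical}. Assume that for \emph{every} $\alpha\in(0,1)$ and $\varepsilon>0$ no such summable $\{c_i\}$ exists; I want to produce a summable sequence $\{c_i\}$ violating \eqref{eq:u0_condition_seq}. Fix a sequence $\lambda_m\downarrow0$ (or simply take a decreasing sequence of $\alpha_m\uparrow1$ and $\varepsilon_m\downarrow0$). The negation of \eqref{eq:ineq_condition_lemma_2_seq} for the pair $(\alpha_m,\varepsilon_m)$ says: for the natural candidate threshold function $i\mapsto$ (the supremum of $u$ violating the inequality at index $i$), the associated series of $\varphi$-values diverges. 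Concretely, for each $m$ and each $i$ set $w_{m,i}=\sup\{u:\ \alpha_m\varphi(u)>\varphi(u-u_{0,i})\text{ and }\varphi(u-u_{0,i})<\varepsilon_m\}$ (with $\sup\emptyset=-\infty$); the failure of \eqref{eq:ineq_condition_lemma_2_seq} for $(\alpha_m,\varepsilon_m)$ means $\sum_i\varphi(w_{m,i})=\infty$. Using divergence, choose for each $m$ a finite block $J_m\subseteq\mathbb{N}$ of indices with $\sum_{i\in J_m}\varphi(w_{m,i})\ge 1$, and — this is where the atomic structure helps — we may take the blocks $J_m$ pairwise disjoint, because a divergent series of nonnegative terms still diverges after deleting any finitely many terms, so at stage $m$ we discard the finitely many indices already used. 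Then define $c_i=w_{m,i}$ for $i\in J_m$ and $c_i=-\infty$ (or any fixed value making $\varphi(c_i)$ tiny) off $\bigcup_m J_m$. By construction $\sum_i\varphi(c_i)=\sum_m\sum_{i\in J_m}\varphi(c_i)\ge\sum_m 1=\infty$, \emph{but} on each block, from the definition of $w_{m,i}$ and the monotonicity/convexity of $\varphi$, one gets $\varphi(c_i-u_{0,i})\le\alpha_m\varphi(c_i)$ together with $\varphi(c_i-u_{0,i})<\varepsilon_m$; arranging additionally (by shrinking $J_m$ if necessary, exactly as the $2^{-(n+1)}$ bounds are arranged in the proof of the analogue of Lemma~\ref{lem:technical}) that $\sum_{i\in J_m}\varphi(c_i-u_{0,i})\le 2^{-m}$, we obtain $\sum_i\varphi(c_i-u_{0,i})<\infty$. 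Writing $\widetilde c_i:=c_i-u_{0,i}$ we then have a summable sequence with $\sum_i\varphi(\widetilde c_i+u_{0,i})=\sum_i\varphi(c_i)=\infty$, i.e.\ \eqref{eq:u0_condition_seq} fails (at $\lambda=1$, hence a fortiori for some $\lambda$).

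The main obstacle, and the reason the statement carries the extra ``$\varphi(u-u_{0,i})<\varepsilon$'' clause absent from \eqref{eq:ineq_condition_lemma_2}, is the handling of the ``large'' indices. In the non-atomic case one can chop a single set into infinitely many pieces of prescribed $\varphi$-mass (Lemma~\ref{lem:seq}), so the inequality must be controlled uniformly for \emph{all} $u\ge c(t)$. With the counting measure each atom is indivisible, so only finitely many indices $i$ can carry $\varphi(\widetilde c_i)\ge\varepsilon$; one is allowed to ignore those finitely many ``bad'' atoms in the ``if'' direction, and conversely in the ``only if'' direction one only needs to defeat the inequality on atoms where $\varphi(\widetilde c_i-u_{0,i})$ is small. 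Making the disjointness of the blocks $J_m$ and the geometric decay of $\sum_{i\in J_m}\varphi(c_i-u_{0,i})$ compatible — so that the diverging and converging series coexist — is the delicate bookkeeping step, but it is a routine greedy construction once the divergence $\sum_i\varphi(w_{m,i})=\infty$ is in hand.
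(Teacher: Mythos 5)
Your proposal is correct and follows essentially the same route as the paper: the ``if'' direction uses the identical three-way index decomposition (below threshold, above threshold with $\varphi(\widetilde c_i)<\varepsilon$, and the finitely many exceptional atoms), and the ``only if'' direction reconstructs the paper's Lemma~\ref{lem:technical_seq} via the sup-defined thresholds $w_{m,i}$, the divergence $\sum_i\varphi(w_{m,i})=\infty$, and disjoint finite blocks with unit mass but geometrically small $\varphi(\cdot-u_{0,i})$-mass. The only (cosmetic) difference is in the block-selection bookkeeping: the paper splits into two explicit cases according to whether the constraint $\varphi(u-u_{0,i})\leq\varepsilon_m$ or the proportionality $\alpha_m\varphi(u)=\varphi(u-u_{0,i})$ binds at the supremum, whereas your greedy choice works directly because you retain both bounds $\varphi(w_{m,i}-u_{0,i})\leq\varepsilon_m$ and $\varphi(w_{m,i}-u_{0,i})\leq\alpha_m\varphi(w_{m,i})$ simultaneously.
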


To prove Proposition~\ref{prop:condition_ineq_seq}, we require a
preliminary lemma.

\begin{lem}
\label{lem:technical_seq} Suppose that we cannot find $\alpha\in(0,1)$,
$\varepsilon>0$ and a sequence $\{c_{i}\}\subset\mathbb{R}\cup\{-\infty\}$
such that $\sum_{i=1}^{\infty}\varphi(c_{i})<\infty$ and
\begin{equation}
\alpha\varphi(u)\leq\varphi(u-u_{0,i}),\qquad\text{for all }u>c_{i}\mbox{ with }\varphi(u-u_{0,i})<\varepsilon.\label{eq:ineq_condition_lemma_seq}
\end{equation}
Then there exist sequences $\{\{c_{n,i}\}\}$ and $\{A_{n}\}$ of
finite-valued real numbers, and pairwise disjoint sets in $\mathbb{N}$,
respectively, such that
\begin{equation}
\frac{1}{2}\leq\sum_{i\in A_{n}}\varphi(c_{n,i})\quad\text{and}\quad\sum_{i\in A_{n}}\varphi(c_{n,i}-u_{0,i})\leq2^{-n},\label{eq:c_n_lambda_n_seq}
\end{equation}
for each $n\geq1$.
\end{lem}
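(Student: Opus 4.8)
The plan is to adapt the argument proving Lemma~\ref{lem:technical} to the counting measure, with two changes dictated by the atomic setting: the ``bad set'' of values $u$ will incorporate the extra smallness requirement $\varphi(u-u_{0,i})<2^{-m}$, and the final disjointification will be carried out by an explicit greedy selection on $\mathbb{N}$ instead of by Lemma~\ref{lem:seq}, which needs a non-atomic measure.

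First I would fix, for each $m\geq1$ and each $i\in\mathbb{N}$, the set
\[
S_{m,i}=\{u\in\mathbb{R}:2^{-m}\varphi(u)>\varphi(u-u_{0,i})\ \text{and}\ \varphi(u-u_{0,i})<2^{-m}\},
\]
which is open by continuity of $\varphi(\cdot)$, and set $f_{m,i}=\sup S_{m,i}$ with the convention $\sup\emptyset=-\infty$. Since $\varphi(u-u_{0,i})\to\infty$ as $u\to\infty$, each $S_{m,i}$ is bounded above, so $f_{m,i}<\infty$ and $\varphi(f_{m,i})<\infty$. I would then argue $\sum_{i}\varphi(f_{m,i})=\infty$ for every $m$: otherwise $\alpha=\varepsilon=2^{-m}$ together with $c_i=f_{m,i}$ would verify the hypotheses excluded by the lemma, because for $u>f_{m,i}$ with $\varphi(u-u_{0,i})<2^{-m}$ one has $u\notin S_{m,i}$, hence $2^{-m}\varphi(u)\leq\varphi(u-u_{0,i})$, which is precisely \eqref{eq:ineq_condition_lemma_seq}.

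Next I would build inductively an increasing sequence $\{m_n\}$ with $m_n\geq n$ and pairwise disjoint finite sets $A_n\subseteq\mathbb{N}$ together with real numbers $c_{n,i}$. Given $A_1,\dots,A_{n-1}$, write $U=A_1\cup\dots\cup A_{n-1}$ and put $m_n=\max\{n,m_{n-1}+1\}$. Since $U$ is finite, $\sum_{i\in\mathbb{N}\setminus U}\varphi(f_{m_n,i})=\infty$, so I can pick a finite set $F\subseteq\mathbb{N}\setminus U$ with $f_{m_n,i}>-\infty$ on $F$ and $\sum_{i\in F}\varphi(f_{m_n,i})>2$; using that $S_{m_n,i}$ is open with supremum $f_{m_n,i}$ and that $\varphi(\cdot)$ is continuous, I can choose $c_{n,i}\in S_{m_n,i}$ for $i\in F$ with $\varphi(c_{n,i})>\varphi(f_{m_n,i})-|F|^{-1}$, so that $\sum_{i\in F}\varphi(c_{n,i})>1$, while $\varphi(c_{n,i}-u_{0,i})<2^{-m_n}\varphi(c_{n,i})$ and $\varphi(c_{n,i}-u_{0,i})<2^{-m_n}$ for each $i\in F$. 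I would then select $A_n\subseteq F$ greedily: if some $i^{*}\in F$ has $\varphi(c_{n,i^{*}})\geq\frac12$, take $A_n=\{i^{*}\}$; otherwise enumerate $F$ and take $A_n$ to be the shortest initial segment whose $\varphi$-values sum to at least $\frac12$, which then sum to strictly less than $1$. In the first case $\sum_{i\in A_n}\varphi(c_{n,i}-u_{0,i})=\varphi(c_{n,i^{*}}-u_{0,i^{*}})<2^{-m_n}\leq2^{-n}$; in the second, $\sum_{i\in A_n}\varphi(c_{n,i}-u_{0,i})<2^{-m_n}\sum_{i\in A_n}\varphi(c_{n,i})<2^{-m_n}\leq2^{-n}$. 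Either way $\frac12\leq\sum_{i\in A_n}\varphi(c_{n,i})$, the sets $A_n\subseteq F\subseteq\mathbb{N}\setminus U$ are pairwise disjoint, and \eqref{eq:c_n_lambda_n_seq} holds; setting $c_{n,i}=0$ for $i\notin A_n$ yields the required finite-valued sequences.

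I expect the greedy step to be the crux, and it is exactly where the parameter $\varepsilon$ in Proposition~\ref{prop:condition_ineq_seq} earns its keep: over an atomic measure one cannot subdivide an atom to normalize $\sum_{i\in A_n}\varphi(c_{n,i})$ to the value $1$ the way Lemma~\ref{lem:seq} does in the non-atomic case, so a lone atom carrying an arbitrarily large value $\varphi(c_{n,i^{*}})$ must be tolerated; the ratio bound $\varphi(c_{n,i^{*}}-u_{0,i^{*}})<\alpha\varphi(c_{n,i^{*}})$ would then be useless for bounding the second sum, and only the absolute bound $\varphi(c_{n,i^{*}}-u_{0,i^{*}})<\varepsilon$ rescues the estimate. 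Checking openness of $S_{m,i}$ and the continuity-based approximation of $\varphi(f_{m,i})$ from below inside $S_{m,i}$ are routine by comparison with the non-atomic proof.
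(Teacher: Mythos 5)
Your proof is correct and follows essentially the same route as the paper's: both define $f_{m,i}$ as the supremum of the set where the ratio inequality $2^{-m}\varphi(u)>\varphi(u-u_{0,i})$ holds subject to a smallness constraint on $\varphi(u-u_{0,i})$ (you use $<2^{-m}$, the paper $\leq 2^{-m-1}$), both derive $\sum_{i}\varphi(f_{m,i})=\infty$ from the negated hypothesis, and both resolve the same dichotomy---a single index already carrying mass $\geq\tfrac12$, handled by the absolute bound, versus mass accumulated from indices with $\varphi<\tfrac12$, handled by the ratio bound. Your per-$n$ greedy selection of points inside the open set $S_{m,i}$ merely reorganizes the paper's global case split on whether the boundary sets $B_{n}$ are infinite; this is a difference in bookkeeping, not in substance.
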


\begin{proof}
For each $m\geq1$, we define the sequence $\{f_{m,i}\}\subset\mathbb{R}\cup\{-\infty\}$
by
\[
f_{m,i}=\sup\{u\in\mathbb{R}:2^{-m}\varphi(u)>\varphi(u-u_{0,i})\text{ and }\varphi(u-u_{0,i})\leq2^{-m-1}\},
\]
where we use the convention $\sup\emptyset=-\infty$. Since \eqref{eq:ineq_condition_lemma_seq}
is not satisfied, we have that $\sum_{i=1}^{\infty}\varphi(f_{m,i})=\infty$
for each $m\geq1$. We will consider the following cases.

\textit{Case}~1. There exists a strictly increasing sequence $\{m_{n}\}\subseteq\mathbb{N}$
for which the set $B_{n}=\{i:\varphi(f_{m_{n},i}-u_{0,i})=2^{-m_{n}-1}\}$
has an infinite number of elements. Then we can select a strictly
increasing sequence $\{i_{n}\}\subseteq\mathbb{N}$ such that
\[
2^{-m_{n}}\varphi(f_{m_{n},i_{n}})\geq\varphi(f_{m_{n},i_{n}}-u_{0,i_{n}})=2^{-m_{n}-1},
\]
which implies $\varphi(f_{m_{n},i_{n}})\geq1/2$. Expression \eqref{eq:c_n_lambda_n_seq}
follows with $c_{n,i}=f_{m_{n},i}$ and $A_{n}=\{i_{n}\}$.

\textit{Case}~2. There exists a strictly increasing sequence $\{m_{n}\}\subseteq\mathbb{N}$
for which the set $B_{n}$, as defined above, has a finite number
of elements. Let us denote $C_{n}=\mathbb{N}\setminus B_{n}=\{i:\varphi(f_{m_{n},i}-u_{0,i})<2^{-m_{n}-1}\}$.
By the continuity of $\varphi(\cdot)$, we have that $2^{-m_{n}}\varphi(f_{m_{n},i})=\varphi(f_{m_{n},i}-u_{0,i})$
for all $i\in C_{n}$. Because $\varphi(f_{m_{n},i})\leq1/2$ for
each $i\in C_{n}$, and $\sum_{i=1}^{\infty}\varphi(f_{m_{n},i})=\infty$
for all $n\geq1$, we can find a strictly increasing sequence $\{k_{n}\}\subset\mathbb{N}$
for which the set $A_{n}=C_{n}\cap\{k_{n-1},\dots,k_{n}-1\}$ satisfies
\[
\frac{1}{2}\leq\sum_{i\in A_{n}}\varphi(f_{m_{n},i})\leq1.
\]
The second inequality above in conjunction with $2^{-m_{n}}\varphi(f_{n,i})=\varphi(f_{n,i}-u_{0,i})$
implies
\[
\sum_{i\in A_{n}}\varphi(f_{m_{n},i}-u_{0,i})\leq2^{-m_{n}}.
\]
Thus expression \eqref{eq:c_n_lambda_n_seq} follows with $c_{n,i}=f_{m_{n},i}$.
\end{proof}

\begin{proof}[Proof of Proposition~\ref{prop:condition_ineq_seq}]
 To show that condition \eqref{eq:u0_condition_seq} implies inequality
\eqref{eq:ineq_condition_lemma_2_seq}, one can proceed as in the
proof of Proposition~\ref{prop:condition_ineq}, using Lemma~\ref{lem:technical_seq}
in the place of Lemma~\ref{lem:technical}.

Suppose that inequality \eqref{eq:ineq_condition_lemma_2_seq} is
satisfied. Let $\{\widetilde{c}_{i}\}$ be any sequence of real numbers
such that $\sum_{i=1}^{\infty}\varphi(\widetilde{c}_{i})<\infty$.
Denote $A=\{i:\widetilde{c}_{i}+u_{0,i}\geq c_{i}\}$ and $B=\{i\in A:\varphi(\widetilde{c}_{i})\leq\varepsilon\}$.
We use inequality \eqref{eq:ineq_condition_lemma_2} to write
\begin{align}
\alpha\sum_{i=1}^{\infty}\varphi(\widetilde{c}_{i}+u_{0,i}) & \leq\alpha\sum_{i\in A\cap B}\varphi(\widetilde{c}_{i}+u_{0,i})+\alpha\sum_{i\in A\setminus B}\varphi(\widetilde{c}_{i}+u_{0,i})+\alpha\sum_{i\in T\setminus A}\varphi(c_{i})\nonumber \\
 & \leq\sum_{i\in A}\varphi(\widetilde{c}_{i})+\alpha\sum_{i\in A\setminus B}\varphi(\widetilde{c}_{i}+u_{0,i})+\alpha\sum_{i\in T\setminus A}\varphi(c_{i})<\infty.\label{eq:condition_ineq_seq_step}
\end{align}
To conclude that the second summation in \eqref{eq:condition_ineq_seq_step}
is finite, we observed that the set $T\setminus B$ is finite. In
consequence, it follows that $\sum_{i=1}^{\infty}\varphi(\widetilde{c}_{i}+nu_{0,i})<\infty$
for all $n\geq1$; and then $\sum_{i=1}^{\infty}\varphi(\widetilde{c}_{i}+\lambda u_{0,i})<\infty$
for all $\lambda>0$.
\end{proof}

The result stated below shows that any deformed exponential function
$\varphi(\cdot)$ can be used in the generalization of R\'{e}nyi divergence,
in the case $\mu$ is the counting measure.

\begin{prop}
Let $\varphi\colon\mathbb{R}\rightarrow[0,\infty)$ be a deformed
exponential. Then we can find a sequence $\{u_{0,i}\}$ for which
condition \eqref{eq:u0_condition_seq} holds.
\end{prop}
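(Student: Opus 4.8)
The plan is to exhibit an explicit sequence and verify condition \eqref{eq:u0_condition_seq} directly; at the end I indicate how Proposition~\ref{prop:condition_ineq_seq} can be used instead. First I would record that every deformed exponential $\varphi(\cdot)$ is non-decreasing: being convex and finite on $\mathbb{R}$ it is continuous with a non-decreasing right derivative, and if that derivative were negative at some $u_0$ the supporting line at $u_0$ would force $\varphi(u)\to\infty$ as $u\to-\infty$, contradicting $\lim_{u\to-\infty}\varphi(u)=0$. Consequently the right derivative of $\varphi(\cdot)$ is non-negative, so on every left half-line $(-\infty,M]$ the function $\varphi(\cdot)$ is Lipschitz, with constant $L_M$ equal to its (finite) right derivative at $M$. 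This localized Lipschitz property is essentially the only ingredient.

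Then I would take $u_{0,i}=2^{-i}$ --- any strictly positive summable sequence works --- and check \eqref{eq:u0_condition_seq}. Let $\{c_i\}\subset\mathbb{R}$ satisfy $\sum_i\varphi(c_i)<\infty$ and fix $\lambda>0$. Since $\varphi(c_i)\to0$ while $\varphi(\cdot)$ is non-decreasing with $\varphi(u)\to\infty$, there are a constant $M$ and an index $i_0$ with $c_i\le M$ for $i\ge i_0$; as $\lambda 2^{-i}\to0$ we may enlarge $i_0$ so that also $c_i+\lambda 2^{-i}\le M+1$ for $i\ge i_0$. The Lipschitz bound on $(-\infty,M+1]$ gives $\varphi(c_i+\lambda 2^{-i})\le\varphi(c_i)+L_{M+1}\lambda 2^{-i}$ for $i\ge i_0$, hence
\[
\sum_{i\ge i_0}\varphi(c_i+\lambda 2^{-i})\le\sum_{i\ge i_0}\varphi(c_i)+L_{M+1}\lambda\sum_{i\ge i_0}2^{-i}<\infty ,
\]
and the finitely many remaining terms are each finite; so $\sum_i\varphi(c_i+\lambda 2^{-i})<\infty$ for every $\lambda>0$, which is \eqref{eq:u0_condition_seq}. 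The one point to be careful about is that $\varphi(\cdot)$ need not be Lipschitz, or even of exponential growth, globally (in the purely atomic setting there is no growth restriction on $\varphi(\cdot)$), so no uniform bound on $\varphi(c_i+\lambda 2^{-i})-\varphi(c_i)$ is available; what saves the argument is that $\sum_i\varphi(c_i)<\infty$ automatically forces the arguments into a fixed left half-line where $\varphi(\cdot)$ is Lipschitz, and choosing $u_{0,i}$ summable makes the accumulated perturbation summable.

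Alternatively, to match the pattern of this section one can invoke Proposition~\ref{prop:condition_ineq_seq}. Fix $\alpha\in(0,1)$ and $\varepsilon>0$, put $v_\varepsilon=\sup\{u:\varphi(u)<\varepsilon\}$, and for $d>0$ set $g(d)=\sup\{u<d+v_\varepsilon:\alpha\varphi(u)>\varphi(u-d)\}$ with the convention $\sup\emptyset=-\infty$. By construction the inequality $\alpha\varphi(u)\le\varphi(u-u_{0,i})$ demanded in \eqref{eq:ineq_condition_lemma_2_seq} holds for all $u>g(u_{0,i})$ with $\varphi(u-u_{0,i})<\varepsilon$, so it would suffice to choose $u_{0,i}\downarrow0$ fast enough that $\sum_i\varphi(g(u_{0,i}))<\infty$ and then set $c_i=g(u_{0,i})$. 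The substantive step here is to prove $\varphi(g(d))\to0$ as $d\to0^{+}$: since $\varphi(u-d)/\varphi(u)\to1$ as $d\to0$ uniformly on compact subsets of $\{\varphi>0\}$ (continuity and compactness), the set defining $g(d)$ retreats toward $\{\varphi=0\}$, whence $g(d)\to a_\varphi:=\inf\{u:\varphi(u)>0\}$ and $\varphi(g(d))\to\varphi(a_\varphi)=0$ (with $\varphi(-\infty)$ read as $0$); minor care is needed near $a_\varphi$, where $\varphi(u-d)$ may vanish, and in the case $a_\varphi=-\infty$.
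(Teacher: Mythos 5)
Your first, direct argument is correct and complete, and it takes a genuinely different route from the paper. The paper proves the proposition by routing through Proposition~\ref{prop:condition_ineq_seq}: it fixes $\alpha\in(0,1)$, $\varepsilon>0$ and a null sequence $\{\lambda_n\}$, defines $\widetilde{c}_n$ as the supremum of the set where the multiplicative inequality $\alpha\varphi(u)\leq\varphi(u-\lambda_n)$ fails (inside the region $\varphi(u-\lambda_n)\leq\varepsilon$), shows $\varphi(\widetilde{c}_n)\downarrow0$ by a dichotomy on whether $\widetilde{c}_n$ tends to a finite limit or to $-\infty$, and then extracts a subsequence with $\sum_i\varphi(c_i)<\infty$. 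You instead exhibit $u_{0,i}=2^{-i}$ explicitly and verify \eqref{eq:u0_condition_seq} by hand: the observations that a deformed exponential is necessarily non-decreasing (a negative subgradient would contradict $\varphi(u)\to0$ as $u\to-\infty$) and hence Lipschitz on every left half-line $(-\infty,M]$ with constant $\varphi'_{+}(M)$ are both correct, and the summability of $\{c_i\}$ does force $c_i\leq M$ eventually (pick $M$ with $\varphi(M)>0$ and use monotonicity), so the additive bound $\varphi(c_i+\lambda2^{-i})\leq\varphi(c_i)+L_{M+1}\lambda2^{-i}$ closes the argument. What your approach buys is a shorter, self-contained proof with an explicit $u_0$, bypassing the equivalence criterion entirely; what the paper's approach buys is consistency with the rest of the section, since the multiplicative inequality \eqref{eq:ineq_condition_lemma_2_seq} is the object the other results are phrased in terms of. Your second, alternative sketch is essentially the paper's argument, but as written it has a gap: the uniform-convergence-on-compacta reasoning does not directly control the supremum $g(d)$, whose defining set is unbounded below; the paper handles this by the finite-limit versus $-\infty$ dichotomy. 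Since your direct argument stands on its own, this does not affect the correctness of the proposal.
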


\begin{proof}
Let $\{\lambda_{n}\}\subset(0,\infty)$ be any decreasing sequence
converging to $0$. Fix any $\alpha\in(0,1)$ and $\eta\in\mathbb{R}$
such that $\alpha\varphi(\eta)<\varphi(\eta-\lambda_{1})$. Denoting
$\varepsilon=\varphi(\eta-\lambda_{1})$, we define
\[
\widetilde{c}_{n}=\sup\{u\in\mathbb{R}:\alpha\varphi(u)>\varphi(u-\lambda_{n})\text{ and }\varphi(u-\lambda_{n})\leq\varepsilon\},\qquad\text{for each }n\geq1,
\]
where we adopt the convention $\sup\emptyset=-\infty$. Clearly, the
sequence $\{\widetilde{c}_{n}\}\subset[-\infty,\eta)$ is decreasing.
We claim that $\varphi(\widetilde{c}_{n})\downarrow0$. If the sequence
$\{\widetilde{c}_{n}\}$ converges to some $c>-\infty$, inequality
$\alpha\varphi(\widetilde{c}_{n})\geq\varphi(\widetilde{c}_{n}-\lambda_{n})$
implies $\alpha\varphi(c)\geq\varphi(c)$ and then $\varphi(c)=0$.
In the case $\widetilde{c}_{n}\downarrow-\infty$, it is clear that
$\varphi(\widetilde{c}_{n})\downarrow0$. Thus we can select a sub-sequence
$c_{i}=\widetilde{c}_{n_{i}}$ such that $\sum_{i=1}^{\infty}\varphi(c_{i})<\infty$
and
\[
\alpha\varphi(u)\leq\varphi(u-u_{0,i}),\qquad\text{for all }u>c_{i}\mbox{ with }\varphi(u)<\varepsilon,
\]
where $u_{0,i}=\lambda_{i}$. From Proposition~\ref{prop:condition_ineq_seq},
it follows that $\{u_{0,i}\}$ satisfies condition \eqref{eq:u0_condition_seq}.
\end{proof}

Such general models provide more robust methods to devise different distributions and improve the capability of inference of which the distribution better fits the available data. For example, in \cite{Rui:2019}, the authors employ a $\varphi$-divergence to the problem of image segmentation achieving better results than classical image processing methods. In their case, the selected $\varphi$ function complies the existence conditions discussed in this work. To fail meeting such existence conditions, in the problem of image classification (segmentation can be one step in the process) would lead to some classes of images (it would depend on the probability distribution of the images)  being erroneously assumed as different ones since the divergence would not include all the statistical characteristics of the image data.

\section{Conclusions}

This paper provided the existence conditions of a generalized R\'{e}nyi divergence so a deformed exponential function can be used to model the statistical distribution. Such conditions admit the design of a robust model by assuming any deformed exponential which provides the use of purely atomic measure. For the non-atomic case not all deformed exponentials can be used to generalize the R\'{e}nyi divergence although there are a fair amount of functions that comply with the existence conditions and therefore be applied to problems based on statistical divergence optimization. The results presented in this paper allow to consider discrete distributions (such as the one we can find in digital applications) to devise the differentiation between two probability distributions, which brings a greater number of possibilities of applications in several areas such as signal and image processing and possible extensions to quantum cases.

\section*{Acknowledgments}

The authors would like to thank CNPq (Procs.\ 408609/2016-8 and 309472/2017-2) and Coordena\c{c}\~{a}o de
Aperfei\c{c}oamento de Pessoal de N\'{\i}vel Superior - Brasil (CAPES) - Finance Code 001 for partial funding of this research.

\section*{References}

\bibliographystyle{elsarticle-num}
\bibliography{refs_conditions}

\end{document}